\documentclass[11pt]{article}

\usepackage[T1]{fontenc}
\usepackage{lmodern}
\usepackage[margin=1in]{geometry}
\usepackage{microtype}
\usepackage{amsmath,amssymb,amsthm,mathtools}
\usepackage{enumitem}
\usepackage{xurl}
\usepackage{xcolor}
\usepackage[
  backend=biber,
  style=numeric,
  sorting=none,
  giveninits=true,
  maxbibnames=99,
  doi=true,
  isbn=false,
  url=true,
  eprint=true
]{biblatex}
\usepackage[hidelinks]{hyperref}
\usepackage[nameinlink,noabbrev]{cleveref}

\setlist{itemsep=0.35em,topsep=0.45em}
\newtheorem{theorem}{Theorem}[section]
\newtheorem{lemma}[theorem]{Lemma}
\newtheorem{proposition}[theorem]{Proposition}

\newtheorem{conjecture}[theorem]{Conjecture}
\theoremstyle{definition}

\theoremstyle{remark}
\newtheorem{remark}[theorem]{Remark}

\newcommand{\F}{\mathbb{F}}
\newcommand{\Q}{\mathbb{Q}}
\newcommand{\R}{\mathbb{R}}
\newcommand{\Z}{\mathbb{Z}}
\newcommand{\OO}{\mathcal{O}}
\newcommand{\softO}{\widetilde{O}}
\newcommand{\polylog}{\operatorname{polylog}}
\DeclareMathOperator{\poly}{poly}
\newcommand{\Norm}{\operatorname{N}}

\title{Quantum Algorithms for Modular Factorials}
\author{
    Yann Tal\\
    \texttt{yanntal@mail.tau.ac.il}
}
\date{July 2026}
\hypersetup{
  pdftitle={Quantum Algorithms for Modular Factorials},
  pdfauthor={Yann Tal}
}

\begin{document}
\maketitle

\begin{abstract}
We give a bounded-error quantum algorithm that, given a prime \(p\), a divisor \(q\mid(p-1)\), and an integer \(0<n<p\), computes
\(n!\bmod p\) in expected time
\[
    \softO\!\left(q^c+\sqrt{p/q}\right)
\]
for some absolute constant \(c\ge1\). When \(p-1\) has a  divisor
of size \(q\approx p^{1/(2c+1)}\), this gives the exponent
\(c/(2c+1)<1/2\). To our knowledge, this is the first algorithm to break
the exponent \(1/2\) barrier for modular factorials under such a
divisor promise.

The main technical ingredient is a quantum algorithm that reconstructs the
relevant Jacobi sum exactly in compact algebraic form, with polynomial
dependence on \(q\) and \(\log p\). We further extend the same asymptotic
bound to the computation of \(n!\bmod p^2\), uniformly over
\(0\le n<p^2\). At \(n=p-1\), this determines the Wilson quotient
\[
    \frac{(p-1)!+1}{p}
    \pmod p.
\]

We conjecture that the condition \(q\mid(p-1)\) is a technical limitation of the present method
rather than an inherent obstruction, and a uniform quantum algorithm exists for all primes.
\end{abstract}

\clearpage
\tableofcontents
\clearpage

\section{Introduction}

\subsection{Classical Algorithms and Main Results}

Let \(p\) be prime and let \(0<n<p\). We further assume that a divisor \(q\)
of \(p-1\), of suitable intermediate size, exists and is provided to us. Our
goal is to compute
\[
    n! = 1\cdot 2\cdots n \pmod p.
\]
However, direct multiplication requires \(n-1\) operations in \(\F_p\). The
natural question studied in this work is whether this product can be evaluated
without paying for essentially every individual factor.\footnote{One may
instead multiply only the primes up to \(n\), raised to their respective
multiplicities in \(n!\), but this saves at most \(\polylog\) factors.}

The fastest known classical algorithm for this problem is due to Bostan,
Gaudry, and Schost \cite{BGS}; however, it improves only by \(\polylog\)
factors over the simpler algorithm known as the baby-step/giant-step product
algorithm, which computes \(n!\bmod p\) using
\[
    \softO(\sqrt n)
\]
field operations. To illustrate the idea, suppose for simplicity that \(n\)
is a perfect square, and define
\[
    f(x)=(x+1)(x+2)\cdots(x+\sqrt n).
\]
Then
\[
    f(0)=1\cdot2\cdots\sqrt n,
\]
\[
    f(\sqrt n)=(\sqrt n+1)\cdots 2\sqrt n,
\]
and similarly for the remaining consecutive blocks. Therefore,
\[
    \prod_{i=0}^{\sqrt n-1} f(i\sqrt n)
    \equiv n!\pmod p.
\]

The polynomial \(f\) has degree \(\sqrt n\) and is evaluated at \(\sqrt n\)
points. A product tree constructs \(f\), and fast multipoint evaluation
computes all these values together using
\[
    \softO(\sqrt n)
\]
field operations. This procedure also requires \(\softO(\sqrt n)\) field
elements of workspace. One may reduce the space usage by working with shorter
blocks, but this increases the number of evaluations and hence the running
time. Wilson's theorem also allows one to apply the same method to the
complementary product:
\[
    n!
    \equiv
    -\left(\prod_{j=n+1}^{p-1}j\right)^{-1}
    \pmod p.
\]
Thus the elementary classical bound is
\[
    \softO\!\left(1+\sqrt{\min\{n,p-1-n\}}\right),
\]
which remains \(\softO(\sqrt p)\) in the worst case.

Our main contribution is the following.

\begin{theorem}\label{thm:main}
There exists a bounded-error quantum algorithm that, given a prime \(p\) and
natural numbers \(q,n\), where \(q\mid(p-1)\) and \(0<n<p\), computes
\[
    n!\pmod p
\]
in expected time
\[
    \softO\left(q^c+\sqrt{\frac{p}{q}}\right),
\]
for some absolute constant \(c\ge 1\).
\end{theorem}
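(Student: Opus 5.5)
The plan is to reduce \(n!\bmod p\) to three pieces. The first is a classical baby-step/giant-step product of length about \(f:=(p-1)/q\). The second is a family of binomial coefficients \(\binom{kf}{f}\bmod p\) for \(1\le k\le q-1\). The third is a final classical correction product of length at most \(f\). The binomials will come from Jacobi sums of characters of order \(q\) via Stickelberger's congruence. The exact computation of those Jacobi sums is the main technical ingredient promised in the abstract, and I would isolate it as a separate theorem to be proved later in the paper. I take as given that a quantum algorithm can output \(J(\chi^a,\chi^b)\in\Z[\zeta_q]\) exactly in time \(\poly(q,\log p)\), as a coefficient vector with entries of size \(O(\sqrt p)\) by \(|J|=\sqrt p\).

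\textbf{Step 1 (algebraic reduction).} Fix a prime \(\mathfrak p\mid p\) of \(\Z[\zeta_q]\). Let \(\omega\) be the Teichmüller character, so that \(\omega(x)\equiv x\pmod{\mathfrak p}\), and set \(\chi=\omega^{-f}\), which has order \(q\). The classical congruence (Stickelberger, or the Gauss-sum congruences of Gross--Koblitz type) reads
\[
J(\omega^{-af},\omega^{-bf})\equiv -\binom{(a+b)f}{af}\pmod{\mathfrak p},\qquad a,b\ge1,\ a+b<q .
\]
I would first reprove this congruence in the paper's normalization and pin down signs and edge cases, for instance \(a+b=q\), where Wilson's theorem enters. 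Reducing the computed \(J\) modulo \(\mathfrak p\) requires a root \(r\in\F_p\) of order \(q\), with \(\zeta_q\mapsto r\). Such an \(r\) is found by taking \(g^{f}\) for random \(g\) and checking its order, using the factorisation of \(q\) (itself obtainable quantumly via Shor's algorithm), in expected \(\poly(q,\log p)\) time. Choosing \(r\) consistently with \(\omega\) matters: a different choice just permutes the \(k\)'s by an automorphism, so I would define \(\omega\) through \(r\) rather than the other way round.

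\textbf{Step 2 (multiples of \(f\)).} Compute \(f!\bmod p\) classically in \(\softO(\sqrt f)=\softO(\sqrt{p/q})\). Then compute \(J(\chi^{k},\chi)\) for \(k=1,\dots,q-2\), which costs \(q\cdot\poly(q,\log p)=\softO(q^c)\). This yields \(\binom{(k+1)f}{f}\bmod p\), and therefore, by telescoping,
\[
(kf)!\equiv (f!)^{k}\prod_{j=2}^{k}\binom{jf}{f}\pmod p
\]
for all \(k\le q-1\). These congruences are legitimate because all the factorials involved are below \(p\) and hence are units.

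\textbf{Step 3 (finish).} Write \(n=kf+s\) with \(0\le s<f\). Then \(n!=(kf)!\cdot\prod_{i=1}^{s}(kf+i)\). The product is evaluated with the baby-step/giant-step polynomial method applied to the shifted polynomial \((x+kf+1)\cdots\), costing \(\softO(\sqrt f)\). Summing the costs gives \(\softO(q^c+\sqrt{p/q})\). The bounded error and the expectation in the running time come only from the Jacobi-sum subroutine and the search for \(r\). Since each Jacobi sum can be verified (through \(J\bar J=p\) and the congruence \(J\equiv-1\pmod{(1-\zeta_q)^2}\)), I would amplify by repetition with a union bound over the \(q\) calls.

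\textbf{Main obstacle.} The real difficulty is the exact quantum reconstruction of \(J(\chi^a,\chi^b)\) in \(\poly(q,\log p)\). Naively, \(J\) is a sum of \(p\) terms. My first attempt would use the fact that \(J\) generates an ideal with known Stickelberger factorisation, \((J)=\mathfrak p^{\theta}\) for an explicit \(\theta\in\Z[\mathrm{Gal}]\). I would compute \(\mathfrak p\) via the factorisation of \(\Phi_q\) mod \(p\), and then find a generator of the principal ideal \(\mathfrak p^{\theta}\) with the quantum principal-ideal algorithm for number fields of degree \(\varphi(q)\) (Biasse--Song). That algorithm is polynomial in the degree and \(\log\) of the discriminant, which is plausibly where the exponent \(c\) comes from. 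The generator is determined only up to a unit. I would then correct the unit using \(|J|^2=p\) under every embedding, which forces the unit to be a root of unity, and the normalisation \(J\equiv -1\pmod{(1-\zeta_q)^2}\), which fixes that root of unity.
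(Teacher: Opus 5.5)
Your reduction in Steps 1--3 is correct and is the paper's argument in different packaging. By \Cref{lem:multifold-jacobi-product}, $J_a(\chi)=\prod_{j=1}^{a-1}J(\chi^j,\chi)$. Hence your telescoped product of the congruences $\binom{jK}{K}\equiv-\rho_{\mathfrak p}(J(\chi^{j-1},\chi))$ (your $f$ is the paper's $K$) is exactly the central congruence \eqref{eq:central}. The paper makes one call to \Cref{thm:jacobi-computation} for $J_a(\chi)$, where you make up to $q-2$ calls; that only raises $c$ by one.

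The genuine gap is in the subroutine you defer, which carries the whole content of the theorem. You fix the residual root of unity via $J\equiv-1\pmod{(1-\zeta_q)^2}$. That normalization is valid and sufficient only for $q$ an odd prime.
\begin{itemize}
\item If $q$ has two distinct prime factors, then $\Norm(1-\zeta_q)=\Phi_q(1)=1$. So $1-\zeta_q$ is a unit and the congruence says nothing.
\item If $q=\ell^m$ with $m\ge2$, then $\zeta_q^k\equiv1-k(1-\zeta_q)\pmod{(1-\zeta_q)^2}$ depends only on $k\bmod\ell$. Roots of unity $\zeta_q^k,\zeta_q^{k'}$ with $k\equiv k'\pmod\ell$ therefore cannot be told apart.
\end{itemize}
The theorem must handle an arbitrary given divisor $q\mid p-1$, so this step fails in general. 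Your verification-based amplification uses the same congruence and fails with it, though majority voting on the residues modulo $p$ would suffice for amplification.

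The paper removes the ambiguity analytically rather than $\ell$-adically. \Cref{thm:gauss-phase-black-box} estimates $\arg G(\chi)$ and $\arg G(\chi^a)$, and hence $\arg J_a(\chi)=a\arg G(\chi)-\arg G(\chi^a)$, to error below $\pi/(8q)$ in time $\softO(aq)$. This is compared with a numerical evaluation of $\arg\alpha$. Since $\mu(L)\subseteq\mu_{2q}$ has angular separation $\pi/q$, the nearest root of unity is the correct $\xi$ for every $q$. This requires coherent evaluation, via discrete logarithms, of the character $\chi$ matched to $\mathfrak p$.

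Two further steps of your sketch need to be filled in.
\begin{itemize}
\item The condition $|\sigma(J)|^2=p$ forces $\alpha/J$ to be a root of unity only after you have found a unit $w$ with $|\sigma(w\gamma)|^2=p$ for all embeddings $\sigma$. Finding $w$ requires a basis of $\OO_L^\times$ (Biasse--Song with $S=\varnothing$). It also requires exact recovery, from numerical approximations, of the integer coordinates of $(\log|\sigma(\gamma)|-\tfrac12\log p)_\sigma$ in the log-unit lattice of $L$. The paper's Steps 3--4 do this through the relative norm equation over $L^+$.
\item The principal-ideal generator is available only as a compact product of possibly enormous height. So outputting $J$ as a coefficient vector needs one of two routes. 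You can evaluate the product to high precision at all embeddings and then round, which is legitimate since $J$ itself has small height. Or, as the paper does, you can reduce the compact representation modulo $\mathfrak p$ directly with \Cref{lem:compact-reduction}.
\end{itemize}
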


The two terms in the running time are balanced when
\[
    q\approx p^{\frac{1}{2c+1}}.
\]
Whenever \(p-1\) has a divisor of this size, the worst-case running time
becomes
\[
    \softO\left(p^{\frac{c}{2c+1}}\right),
\]
and therefore achieves the exponent
\[
    \frac{c}{2c+1}
    =
    \frac{1}{2}-\frac{1}{4c+2}
    <
    \frac{1}{2},
\]
breaking the square-root barrier.

Prime-square moduli are also of independent arithmetic interest. For a prime
\(p\), the Wilson quotient is
\[
    W_p=\frac{(p-1)!+1}{p}.
\]
Consequently, computing \((p-1)!\bmod p^2\) determines \(W_p\bmod p\), and
\(p\) is a Wilson prime precisely when
\[
    (p-1)!\equiv -1\pmod{p^2}.
\]
Costa, Gerbicz, and Harvey \cite{CostaGerbiczHarvey} used this connection in
their search for Wilson primes, giving an algorithm that computes these
residues in average polynomial time per prime when many primes are treated
simultaneously. Still, it was not known how to break the square-root barrier, when one is interested in a worst-case algorithm on a single input. Our second main
theorem, treats modular factorials modulo \(p^2\) for a single
input.

Our second main contribution is a refinement of \Cref{thm:main} to
prime-square moduli.

\begin{theorem}\label{thm:prime-square}
Let \(c\) be the absolute constant from \Cref{thm:main}. There exists a
bounded-error quantum algorithm such that, given a prime \(p\) and natural
numbers \(q,n\), where \(q\mid(p-1)\) and \(0\le n<p^2\), the algorithm
computes
\[
    n!\pmod{p^2}
\]
in expected time
\[
    \softO\left(q^c+\sqrt{\frac{p}{q}}\right),
\]
\end{theorem}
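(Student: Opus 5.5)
We reduce \Cref{thm:prime-square} to two ingredients: the machinery behind \Cref{thm:main}, and a lift of the Jacobi-sum computation from \(\F_p\) to \(\Z/p^2\Z\).

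\textbf{Step 1: reduce to \(0\le n<p\) and a few special values.} Write \(n=ap+b\) with \(0\le a<p\) and \(0\le b<p\). Split \(n!\) into the factors prime to \(p\) and the multiples of \(p\). The multiples of \(p\) contribute \(p^{a}\cdot a!\). If \(a\ge2\), then \(n!\equiv0\pmod{p^2}\), so only \(a\in\{0,1\}\) matters.
\begin{itemize}
\item For \(a=0\) we need \(b!\bmod p^2\).
\item For \(a=1\) we have \(n!=p\cdot\prod_{j\le n,\,p\nmid j} j\). Here only the product mod \(p\) is needed, and it equals \((p-1)!\cdot b!\equiv -b!\pmod p\). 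This case is therefore already covered by \Cref{thm:main}.
\end{itemize}
So the real task is computing \(m!\bmod p^2\) for \(0<m<p\). By the mod-\(p^2\) Wilson relation, knowing \((p-1)!\bmod p^2\) also lets us pass to the complementary product when convenient.

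\textbf{Step 2: redo the main algorithm over \(\Z/p^2\Z\).} The algorithm of \Cref{thm:main} uses the order-\(q\) subgroup \(H\) of \(\F_p^*\) to split \(m!\) into two parts. One is a structured product over cosets of \(H\), expressed through Gauss/Jacobi sums of characters of order dividing \(q\); this costs \(q^c\). The other is a residual interval product of length about \(\sqrt{p/q}\) after a quantum speedup, or equivalently a product of \(p/q\)-type blocks handled by baby-step/giant-step together with amplitude estimation or search. I would lift each part to \(\Z/p^2\Z\).

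\emph{The residual part.} Polynomial-product/multipoint evaluation works verbatim over any ring \(\Z/N\Z\) at a \(\polylog\) cost, so this contributes \(\softO(\sqrt{p/q})\).

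\emph{The structured part.} Use Teichmüller lifts. The subgroup \(H\) lifts canonically to the \(q\)-th roots of unity \(\omega(H)\subset(\Z/p^2\Z)^*\), computed by raising a generator's lift to the \(p\)-th power. Products over a coset \(x\,\omega(H)\) give \(x^q\prod\zeta=x^q\cdot(\pm1)\). The difficulty is that the coset of an integer \(j\) is not its Teichmüller coset. The deviation \(j/\omega(j)\equiv1+p\,\ell(j)\), where \(\ell(j)\) is a Fermat-quotient-type term, must be accounted for to first order in \(p\). Since everything is mod \(p^2\), only the linear term survives, and the correction becomes
\[
\sum_j \ell(j)\pmod p
\]
over the relevant index set.

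\textbf{Step 3: compute the first-order correction (the main obstacle).} I expect this to be the hardest step. The Jacobi sum reconstructed exactly in compact algebraic form (the main technical ingredient) sits in \(\Z[\zeta_q]\). Its image in \(\Z_p\)-adic completions, via the Gross–Koblitz/Stickelberger congruence, determines Gauss sums modulo higher powers of \(\pi\), hence modulo \(p^2\). Because the reconstruction is exact, not merely mod \(p\), we can reduce it mod \(p^2\) under the prime above \(p\) splitting in \(\Q(\zeta_q)\) (possible since \(q\mid p-1\)). This should yield the next-order term in Stickelberger's congruence, which is precisely the sums of Fermat quotients or \(p\)-adic Gamma values \(\Gamma_p(k/q)\bmod p^2\) that the correction requires. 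So the plan is:
\begin{enumerate}
\item Apply Gross–Koblitz to express \(\prod\Gamma_p(a/q)\) in terms of the exact Jacobi sums.
\item Use \(\Gamma_p\) interpolation to relate Morita's \(\Gamma_p\) to block factorials mod \(p^2\).
\item Check that the number of Jacobi sums needed is \(\poly(q)\), with \(\poly(q,\log p)\) arithmetic each.
\end{enumerate}
The total is \(\softO(q^c+\sqrt{p/q})\) with the same \(c\). The remaining risk is handling signs and powers-of-\(p\) normalizations in Gross–Koblitz consistently mod \(p^2\).
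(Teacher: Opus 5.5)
Your handling of \(p\le n<p^2\) matches the paper: \(n!\equiv0\) for \(n\ge2p\), and \(n!\equiv-p\,m!\pmod{p^2}\) for \(n=p+m\), which needs only \(m!\bmod p\). So does your plan to run the interval products over \(\Z/p^2\Z\) and to reduce the exact compact representation of \(J_a(\chi)\) modulo \(\mathfrak p^2\). Beyond that, the proposal does not contain a working argument.

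Your Step~2 misdescribes \Cref{thm:main}. It uses no products over cosets of the order-\(q\) subgroup and no amplitude estimation. It writes \(n=aK+r\) with \(K=(p-1)/q\), uses \((aK)!\equiv(-1)^{a-1}\rho_{\mathfrak p}(J_a(\chi))(K!)^a\pmod p\), and finishes with classical interval products of length at most \(K\). Your Teichm\"uller decomposition amounts to \(m!\equiv\omega(m!)\bigl(1+p\sum_{j\le m}\ell(j)\bigr)\pmod{p^2}\). This is true but circular: \(\omega(m!)\) depends only on \(m!\bmod p\), so evaluating \(\sum_{j\le m}\ell(j)\bmod p\) over an interval of length up to \(p\) is equivalent to the original problem. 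Everything is therefore deferred to Step~3.

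In Step~3, the argument breaks at the claim that reducing \(J_a(\chi)\) modulo \(\mathfrak p^2\) ``should yield'' the correction. By Gross--Koblitz, \(\rho_{\mathfrak p^2}(J_a(\chi))\) is, up to sign, \(\Gamma_p(1/q)^a/\Gamma_p(a/q)\bmod p^2\). That is \(\Gamma_p\) at the \(p\)-adic points \(1/q\equiv-K(1+p)\) and \(a/q\equiv-aK(1+p)\pmod{p^2}\). By contrast, \((aK)!/(K!)^a\) corresponds, up to sign, to \(\Gamma_p(-K)^a/\Gamma_p(-aK)\), i.e.\ \(\Gamma_p\) at integers. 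The two differ by a first-order Taylor term in the logarithmic derivative of \(\Gamma_p\), which at \(-n\) equals a constant plus \(H_n\). The constant cancels, and one gets exactly the paper's \Cref{prop:lifted-central}:
\[
(aK)!\equiv(-1)^{a-1}J_a^{(2)}(K!)^a\Bigl(1+\tfrac{ap}{q}(H_{aK}-H_K)\Bigr)\pmod{p^2}.
\]
The paper derives this from Young's congruence plus a block expansion, which is the same mechanism. So your Gross--Koblitz instinct can produce the congruence. But the Jacobi sum does not supply \(H_{aK}-H_K\), which is a sum of up to about \(p\) terms.

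The missing idea is the roots-of-unity filter of \Cref{lem:fourier-harmonic}:
\begin{itemize}
\item Since \(q(kK+u)\equiv qu-k\pmod p\), each block of \(H_{aK}-H_K\) is a sum of \(1/m\) over a residue class modulo \(q\).
\item Filtering with powers of \(\eta\) and using \(\sum_{m=1}^{p-1}\eta^{tm}/m\equiv-(1-\eta^t)Q_t\pmod p\) gives \(H_{aK}-H_K\equiv\sum_{t=1}^{q-1}(\eta^{at}-\eta^t)Q_t\), where \(Q_t=\bigl((1-\widetilde\eta^{\,t})^{p-1}-1\bigr)/p\).
\item This costs \(\softO(q)\).
\end{itemize}
Without this, or some other way to evaluate the first-order term within \(\softO(q^c+\sqrt{p/q})\), your plan does not reach the stated bound.
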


Breaking the square-root bound again.

\subsection{The Technique}

The key fact underlying the algorithm in the paper, is that for 
\[
    K=\frac{p-1}{q}.
\]
there is a quantum algorithm computing 
$(aK)! \pmod p$, for 
any multiple $1 \le a < q$,
such that the quantum algorithm has a non-trivial running time, namely, it runs in time $\softO(q^{c_0})$ for some constant $c_0$. If $q$ is small enough (e.g., if $q<n^{1/2c_0}$), this running time is smaller than the square-root barrier we discussed above. Once we have such an algorithm, we can compute $n! \pmod p$ by finding the closest multiple of $K$ below $n$, and completing the remaining terms $(aK+1) \cdot \ldots \cdot n$ by the baby-step giant-step classical algorithm.

The main question is how to compute factorials of multiples of $K$.
Here, we use a specialization of Young's \(p\)-adic congruences relating Jacobi sums and multinomial coefficients \cite[Theorem~2.2]{Young1995}, giving a closed form for \((aK)! \pmod p\).\footnote{A slightly more complicated closed form modulo \(p^2\) is given in Proposition~\ref{prop:lifted-central}.}
It says:
for every \(1\le a<q\), 
\begin{equation}\label{eq:central}
    (aK)!
    \equiv
    (-1)^{a-1}
    \rho_{\mathfrak p}\!\left(J_a(\chi)\right)
    (K!)^a
    \pmod p.
\end{equation}
The notation in this identity is introduced in \Cref{sec:preliminaries},
and we recommend the reader to overlook it in a first reading. The main thing to take from this formula
is that if one can compute $K! \pmod p$ and a Jacobi sum modulo a prime ideal $\mathfrak{p}$ related to $p$ and $q$, then one can also compute $(aK)!$.

We give a quantum algorithm for computing the Jacobi sum modulo the ideal $\mathfrak{p}$ in time $\softO(q^{c_0})$. 
Computing $K! \pmod p$ can be done in time $\softO(K^{1/2})$ using the baby-step giant-step 
algorithm.  Altogether, the running time $O(q^{c_0}+\sqrt{\frac{p}{q}})$ as promised. So we now turn to discuss the idea behind the Jacobi part.

There is a quantum algorithm for approximating $J_a(\chi)$ as a complex number. Van Dam and Seroussi \cite{VanDamSeroussi} gave a quantum algorithm for estimating the phase of a Gauss sum, and with that one can estimate the phase of the Jacobi sum $J_a(\chi)$. The magnitude of the Jacobi sum $J_a(\chi)$ is known, and together these give a numerical approximation to $J_a(\chi)$. The main problem that we face is that we need the Jacobi sum modulo the prime ideal $\mathfrak p$, which such a numerical approximation does not directly provide.

\paragraph{High-level idea.}
The main difficulty is that the Jacobi sum is an algebraic integer in the
cyclotomic field $K=\mathbb{Q}(\zeta_q)$, while the algebraic generators
produced by the number-field algorithms used below may have enormous height and
therefore prohibitively large expanded representations. Thus, even though we
only seek the residue of the Jacobi sum modulo the prime ideal
$\mathfrak p$, it is not clear how to recover this residue from such
generators without first expanding them.

Our approach is to avoid constructing the Jacobi sum explicitly. Instead, we
compute a \emph{compact representation} of it. More precisely, we first compute
the principal ideal generated by the Jacobi sum, and then use quantum algorithms
for principal ideals and unit groups to recover a compact algebraic description
of a generator of this ideal. This representation has size polynomial in
$q$ and $\log p$, yet contains sufficient information to identify the
Jacobi sum and evaluate it modulo $\mathfrak p$. Consequently, we obtain the
desired residue of the Jacobi sum modulo $\mathfrak p$ without ever expanding
its compact algebraic representation in the number field. We perform the
multiplication represented by the compact power product only after reducing
modulo $\mathfrak p$, when the computation takes place in the much smaller
residue field.

Thus, the central idea is to replace the expansion of large intermediate
algebraic numbers by a succinct algebraic representation that is sufficient
for the desired modular reduction.

\paragraph{The algorithmic realization of the above program.} We now need to discuss some of the number-theoretic details involved. The Jacobi sum \(J_a(\chi)\) may be viewed as a complex number, but for our purposes its more useful representation is as an algebraic integer \(\sum_{i=0}^{\varphi(q)-1}c_i\zeta_q^i\in\Z[\zeta_q]\), where the \(c_i\) are integers and, since \(\chi\) is a character over \(\F_p\) of order \(q\), \(\zeta_q\) is a primitive \(q\)-th root of unity. For our application, what we need is not \(J_a(\chi)\) itself, but rather \(J_a(\chi)\pmod{\mathfrak p}\), where \(\mathfrak p\) is a prime ideal of \(\Z[\zeta_q]\).

Our algorithm has three main steps:
\begin{enumerate}
\item We look at the ideal generated by \(J_a(\chi)=\sum_{i=0}^{\varphi(q)-1}c_i\zeta_q^i\) in \(\Z[\zeta_q]\). While we do not know the coefficients \(c_i\), the \emph{ideal} generated by \(J_a(\chi)\) is easier to determine. Stickelberger's theorem tells us explicitly how it decomposes into a product of prime ideals. A main step in the algorithm is to find an element \(\gamma\in\Z[\zeta_q]\) that generates the same ideal. For that we use the Biasse--Song PIP algorithm, detailed in \Cref{thm:pip-black-box}.
\item We now use \(\gamma\) to find another element \(\alpha\in\Z[\zeta_q]\) that generates the same ideal and has the additional property that it differs from \(J_a(\chi)\) only by multiplication by a root of unity. For that we use the Biasse--Song \(S\)-unit algorithm twice, detailed in \Cref{thm:s-unit-black-box}.
\item Finally, we are left with finding this root of unity, and for that we use the quantum Gauss-sum phase-estimation algorithm of Van Dam and Seroussi \cite{VanDamSeroussi}, stated in \Cref{thm:gauss-phase-black-box}. Together, we recover \(J_a(\chi)\) in an exact compact representation that allows us to reduce it modulo \(\mathfrak p\) efficiently.
\end{enumerate}
The beauty is that the above steps use compact representations throughout, allowing us finally to perform the reduction modulo \(\mathfrak p\). Another encouraging feature is that, although many quantum algorithms in computational number theory rely on unproven assumptions such as the Generalized Riemann Hypothesis, all the routines used here are unconditional.

%The reduction underlying our algorithm can be summarized as follows:
%\[
%    J_a(\chi)   \xrightarrow{\text{reduction modulo }\mathfrak p}
%\rho_{\mathfrak p}\!\left(J_a(\chi)\right)
%\xrightarrow{\text{central congruence}}
%(aK)!\pmod p
%\xrightarrow{\text{interval product}}
%n!\pmod p.
%\]

\subsubsection{Exact Reconstruction of the Jacobi Sum}

We think that it is also worthwhile to state what we have obtained for the reconstructing Jacobi-sum exactly in compact algebraic form.
For fixed character order, Buhler and Koblitz gave an LLL-based classical
algorithm that computes the corresponding Jacobi sums in time
\(O(\log^3 p)\) \cite{BuhlerKoblitz1998}. Their analysis treats the character
order as fixed. The regime needed here is different: \(q\) grows with the
input and is the parameter that yields the factorial speedup. The following
theorem gives polynomial dependence on both \(q\) and \(\log p\), and more
strongly reconstructs the Jacobi sum exactly in compact algebraic form.

\begin{theorem}[Exact Jacobi-sum reconstruction]
\label{thm:jacobi-computation}
There exists an absolute constant \(c\ge 1\) and a bounded-error quantum
algorithm with the following property. Let \(p\) be prime, let \(q\) divide
\(p-1\), and let \(1\le a<q\). Given the character \(\chi\) and the prime
ideal \(\mathfrak p\) appearing in \eqref{eq:central}, the algorithm outputs
an exact compact representation of
\[
    J_a(\chi)\in\OO_L
\]
of total description length \(q^{O(1)}\polylog(p)\), in expected time
\[
    \softO(q^c).
\]
Consequently, for each \(h\in\{1,2\}\), it also computes
\[
    \rho_{\mathfrak p^h}\!\left(J_a(\chi)\right)
    \in
    \Z/p^h\Z
\]
within the same asymptotic bound.
\end{theorem}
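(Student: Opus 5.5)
The plan has three stages, following the high-level outline in the introduction. Throughout, \(L=\Q(\zeta_q)\) has degree \(\varphi(q)\le q\), and every quantity is kept in compact (power-product) form. The total description length stays \(q^{O(1)}\polylog(p)\).

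\emph{Stage 1: write down the ideal.} Stickelberger's theorem gives the factorization of the principal ideal \((J_a(\chi))\) explicitly. It has the form \(\prod_{\sigma_t}\sigma_t^{-1}(\mathfrak p)^{e_t}\), where the sum runs over \(t\in(\Z/q\Z)^\times\) and the exponents \(e_t\in\{0,1\}\) are determined by the fractional parts \(\{t/q\}+\{ta/q\}\ge1\) or a similar rule. The norm \(p^{f}\), and in particular the Weil bound \(|J_a(\chi)|^2=p\) in every complex embedding, is consistent with this. We compute \(\mathfrak p\) from the given data as \((p,\zeta_q-\omega)\) with \(\omega\in\F_p\) of order \(q\), and we compute the Galois conjugates and their product in HNF. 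This costs \(\poly(q,\log p)\) classically.

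\emph{Stage 2: find a generator and normalise it.} We apply the PIP algorithm of \Cref{thm:pip-black-box} to this ideal. It returns a compact representation of some generator \(\gamma\), so that \(J_a(\chi)=u\gamma\) for an unknown unit \(u\in\OO_L^\times\). To pin down \(u\) up to torsion, we use the known archimedean data: \(|\sigma(J_a(\chi))|=\sqrt p\) for every embedding \(\sigma\). The log-embedding of \(J_a(\chi)\) is therefore the constant vector \(\tfrac12\log p\cdot\mathbf 1\). We compute \(\operatorname{Log}(\gamma)\) from its compact form. We obtain a basis of the unit lattice, in compact form, from \Cref{thm:s-unit-black-box}, invoked for the unit group and again as needed for the \(S\)-unit relations. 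We then solve for the unit \(u\) whose log vector equals \(\tfrac12\log p\cdot\mathbf1-\operatorname{Log}(\gamma)\). This is exact linear algebra, since the target lies in the unit lattice; the only requirement is that the precision be polynomial in \(q\) and \(\log p\), which is plausible because the compact representations control the heights. This yields \(\alpha=u\gamma\) with \(J_a(\chi)=\zeta\alpha\), where \(\zeta\in\mu_{2q}\) is a root of unity.

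\emph{Stage 3: fix the root of unity and reduce.} Only \(2q\) candidates remain. We estimate the phase of \(J_a(\chi)\) in one fixed complex embedding. We write \(J_a(\chi)\) as a ratio of Gauss sums, \(g(\chi)g(\chi^a)/g(\chi^{a+1})\) up to the standard sign convention, and run the Van Dam--Seroussi estimator of \Cref{thm:gauss-phase-black-box} on each factor to precision \(1/(100q)\). We evaluate the phase of \(\alpha\) numerically from its compact form. The unique \(\zeta\) whose angle matches the difference is then selected with bounded error, and we amplify by repetition.

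The residue \(\rho_{\mathfrak p^h}(J_a(\chi))\) for \(h\in\{1,2\}\) is computed by reducing each factor of the power product modulo \(\mathfrak p^h\), i.e.\ in \(\OO_L/\mathfrak p^h\cong\Z/p^h\Z\), and then multiplying with repeated squaring. This is valid only when the factors are \(\mathfrak p\)-integral units locally. If a factor has nonzero \(\mathfrak p\)-valuation, we handle it by tracking valuations and using a uniformiser; the exponents cancel because \(J_a(\chi)\) itself has valuation \(e_{1}\in\{0,1\}\). If the valuation is \(1\), the residue mod \(\mathfrak p\) is \(0\), and the mod-\(\mathfrak p^2\) value is obtained by the same tracking.

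\textbf{Main obstacle.} I expect Stage 2 to be the hardest: obtaining \(u\) exactly, with precision bounds, from numerical log-embeddings of compact representations, and doing so in time polynomial in \(q\) rather than merely subexponential. Keeping all \(\mathfrak p\)-adic reductions well-defined when intermediate factors are divisible by \(\mathfrak p\) is a secondary difficulty. I would try first to choose the \(S\)-unit group with \(S\) containing the primes above \(p\), so that everything reduces to the \(S\)-unit black box with its stated polynomial guarantees.
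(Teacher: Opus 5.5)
Your pipeline matches the paper's: Stickelberger ideal, PIP generator, normalising the unit from archimedean data, a Gauss-sum phase to fix the root of unity, and reduction of the compact form. As written, however, it reconstructs the wrong algebraic integer.

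In the statement, $J_a(\chi)$ is the $a$-fold sum $\sum_{x_1+\cdots+x_a=1}\prod_i\chi(x_i)=G(\chi)^a/G(\chi^a)$. Hence $J_a(\chi)\overline{J_a(\chi)}=p^{a-1}$ and $(J_a(\chi))=\prod_t\mathfrak p_t^{\lfloor at/q\rfloor}$, with exponents up to $a-1$. Everything you use belongs to the two-character sum $J(\chi,\chi^a)=J_{a+1}(\chi)/J_a(\chi)$ instead: absolute value $\sqrt p$ in every embedding, exponents in $\{0,1\}$ from $\{t/q\}+\{ta/q\}\ge1$, and the ratio $g(\chi)g(\chi^a)/g(\chi^{a+1})$.

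This breaks concrete steps.
\begin{itemize}
\item In Stage 2, the correct ideal has norm $p^{(a-1)\varphi(q)/2}$. Your target $\tfrac12\log p\cdot\mathbf 1-\operatorname{Log}(\gamma)$ then has nonzero coordinate sum for $a\neq2$. It is not in the log-unit lattice, so the solve has no solution. The right target is $\tfrac{a-1}{2}\log p\cdot\mathbf 1-\operatorname{Log}(\gamma)$.
\item In Stage 3, the right phase is $a\arg G(\chi)-\arg G(\chi^a)$. Precision $1/(100q)$ on $\arg G(\chi)$ therefore gives error up to $a/(100q)$. This exceeds the half-spacing $\pi/(2q)$ of $\mu_{2q}$ once $a>50\pi$. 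You need $\arg G(\chi)$ to precision $O(1/(aq))$, costing $\softO(aq)\le\softO(q^2)$, which is still polynomial. Alternatively, run your method on each $J(\chi^j,\chi)$ for $1\le j\le a-1$ and multiply the compact forms, since $J_a(\chi)=\prod_{j=1}^{a-1}J(\chi^j,\chi)$.
\item With the correct exponents, $v_{\mathfrak p}(J_a(\chi))=\lfloor a/q\rfloor=0$, so your valuation-one branch never arises.
\item Since $\mathfrak p$ need not be principal, your ``uniformiser'' should be $p$ itself, as in \Cref{lem:compact-reduction}.
\end{itemize}

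Once corrected, your Stage 2 is a cleaner route than the paper's Step 4. The paper computes unit bases of both $L$ and $L^+$ and forms $v=p^{a-1}/(\gamma\overline\gamma)\in\OO_{L^+}^\times$. It recovers integer coordinates of $v$ and of each $\epsilon_j\overline{\epsilon_j}$ in the $L^+$ basis, solves $Ax=b$ by Smith normal form, and uses total positivity to remove the sign. Because $\log|\sigma(w)|=\tfrac12\log|\sigma(w\overline w)|$ for every embedding, this is the same lattice problem as rounding your target's coordinates in the log-unit lattice of $L$. Your version needs one unit-group computation and no integer linear system.

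The step you call ``plausible'' is what the paper actually proves, and you would need to supply it:
\begin{itemize}
\item A Biasse--Song lower bound on the first minimum of the log-unit lattice gives $|\det E|\ge2^{-q^{O(1)}}$ by packing.
\item The compact representations bound the entries of $E$ and of the target by $2^{q^{O(1)}\polylog(p)}$.
\item Cramer's rule with Hadamard's inequality then bounds $\|E^{-1}\|$.
\item A perturbation estimate shows that polynomially many bits of precision make rounding exact.
\end{itemize}
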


Taking \(h=1\) in \Cref{thm:jacobi-computation}, and combining the
result with the central congruence and the two classical interval products,
gives the running time
\[
    \softO\left(q^c+\sqrt{\frac{p}{q}}\right)
\]
claimed in \Cref{thm:main}.

\subsection{How restrictive is the assumption on \(q\)?}

We first ask whether a suitable divisor \(q\) can be found efficiently. At
first sight, this appears to introduce another difficult search problem: one
must first factor \(p-1\), and then select a divisor of approximately the right
size. The first step can be carried out in quantum polynomial time using
Shor's algorithm \cite{Shor}. The second is less obviously easy: even given
the complete prime factorization of \(p-1\), choosing which prime powers to
include resembles a multiplicative version of Subset Sum.

Fortunately, an exact solution is unnecessary. It is enough to find a divisor
whose size is within a small multiplicative factor of the best available
choice. This can be done efficiently by a standard trimming argument.

\begin{proposition}
\label{prop:approximate-divisor-selection}
Given the complete prime factorization of \(p-1\), a target \(T\leq p-1\),
and \(0<\varepsilon\leq 1\), one can deterministically find a divisor
\(q\mid p-1\) such that
\[
    \frac{1}{1+\varepsilon}
    \max\{d:d\mid p-1,\ d\leq T\}
    \leq q\leq T
\]
in time polynomial in \(\log p\) and \(1/\varepsilon\).
\end{proposition}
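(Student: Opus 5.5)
We follow the standard trimming approach for approximate subset sum, carried out in the logarithmic domain. Write \(p-1=\prod_{i=1}^{k}\ell_i^{e_i}\) with distinct primes \(\ell_i\). Since each \(\ell_i\ge 2\), we have \(k\le\log_2 p\) and \(\sum_i e_i\le\log_2 p\). The divisors of \(p-1\) are exactly the products \(\prod_i \ell_i^{f_i}\) with \(0\le f_i\le e_i\). We process the prime powers one prime at a time. The algorithm maintains a list \(L_j\) of divisors of \(\prod_{i\le j}\ell_i^{e_i}\), each at most \(T\), with the following invariant:

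\[
\text{for every divisor } d \text{ of } \textstyle\prod_{i\le j}\ell_i^{e_i} \text{ with } d\le T, \text{ there is } d'\in L_j \text{ with } d/(1+\delta)^j\le d'\le d.
\]

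Here we set \(\delta=\varepsilon/(2k)\), or more simply \(\delta=\varepsilon/(2\log_2 p)\), so that \((1+\delta)^k\le e^{\varepsilon/2}\le 1+\varepsilon\) for \(0<\varepsilon\le1\).

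\textbf{Construction.} Start with \(L_0=\{1\}\). To pass from \(L_{j-1}\) to \(L_j\):

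\begin{enumerate}
\item Form all products \(d'\ell_j^{f}\) with \(d'\in L_{j-1}\) and \(0\le f\le e_j\), and discard those exceeding \(T\).
\item Trim the result: sort it and scan upward, keeping an element only if it exceeds the last kept element by a factor greater than \(1+\delta\).
\end{enumerate}

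Finally, output the maximum of \(L_k\).

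\textbf{Correctness.} Correctness follows by induction on \(j\). Let \(d=d_0\ell_j^{f}\le T\), where \(d_0\) divides \(\prod_{i<j}\ell_i^{e_i}\). Then \(d_0\le T\), so the invariant gives \(d_0'\in L_{j-1}\) with \(d_0/(1+\delta)^{j-1}\le d_0'\le d_0\). The product \(d_0'\ell_j^f\) lies in \([d/(1+\delta)^{j-1},\,d]\), so it survives the filter in step 1. Trimming replaces it by a kept element at most a factor \(1+\delta\) smaller and not larger. Kept elements are original divisors that are at most \(T\), so every output \(q\) is a genuine divisor with \(q\le T\). Applying the invariant at \(j=k\) to \(d^*=\max\{d\mid p-1: d\le T\}\) gives \(q\ge d^*/(1+\varepsilon)\).

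\textbf{Running time.} This is the step that requires care. After trimming, consecutive elements of \(L_j\) differ by a factor greater than \(1+\delta\) and all lie in \([1,T]\). Hence
\[
|L_j|\le 1+\frac{\log T}{\log(1+\delta)}=O\!\left(\frac{\log p}{\delta}\right)=O\!\left(\frac{\log^2 p}{\varepsilon}\right).
\]
Each stage therefore forms at most \((e_j+1)|L_{j-1}|\) products. Summed over all \(j\), using \(\sum_j(e_j+1)\le 2\log_2 p\), the total is \(O(\log^3 p/\varepsilon)\) products of integers with \(O(\log p)\) bits, plus the sorting. Each comparison against \(T\) and each ratio test \(a>(1+\delta)b\) can be done exactly in integer arithmetic by taking \(\delta=1/N\) with \(N=\lceil 2\log_2 p/\varepsilon\rceil\) and comparing \(Na\) with \((N+1)b\). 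The overall cost is thus polynomial in \(\log p\) and \(1/\varepsilon\), and the algorithm is deterministic.

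The main obstacle is this size bound, that is, ensuring the lists cannot blow up combinatorially. The trimming rule handles it by keeping elements geometrically separated. The only care needed is in choosing \(\delta\) so that the per-stage errors \((1+\delta)^k\) compound to at most \(1+\varepsilon\).
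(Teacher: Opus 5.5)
Your proposal is correct and is essentially the paper's proof. Both use the Ibarra--Kim trimmed-list approximation for Subset Sum applied multiplicatively, with \(\delta\) of order \(\varepsilon/\log p\), an inductive invariant that loses one factor of \(1+\delta\) per stage, and the list-size bound \(O((\log p)^2/\varepsilon)\) from geometric separation. The differences are cosmetic: you process each distinct prime power \(\ell_j^{e_j}\) in a single stage (so \(k\) stages rather than the paper's \(m\) stages, one per prime factor counted with multiplicity), and you make the ratio test exact in integer arithmetic.
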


\begin{proof}
Write the factorization
\[
    p-1=r_1r_2\cdots r_m,
\]
where the prime factors are repeated according to their multiplicities. Since
every \(r_i\geq 2\), we have \(m\leq \log_2 p\).

The idea is to process the factors one at a time while retaining only a sparse
list of representative divisors. Set
\[
    \delta=\frac{\varepsilon}{2m}
\]
and begin with \(L_0=\{1\}\). After constructing \(L_{i-1}\), form
\[
    L_{i-1}\cup
    \{r_i x:x\in L_{i-1},\ r_i x\leq T\}.
\]
Sort the resulting list. Retain its smallest element, and thereafter retain an
element only when it is larger than \(1+\delta\) times the previously retained
element. Denote the trimmed list by \(L_i\).

This is the standard trimmed-list approximation for Subset Sum
\cite{IbarraKim1975}, applied multiplicatively. We claim that, after the first
\(i\) factors have been processed, every divisor \(x\leq T\) of
\(r_1\cdots r_i\) has a representative \(y\in L_i\) satisfying
\[
    y\leq x\leq (1+\delta)^i y.
\]
Indeed, before trimming, either \(x\) itself or the product of \(r_i\) with a
representative from \(L_{i-1}\) occurs in the new list. Trimming loses at most
one additional factor of \(1+\delta\), proving the claim by induction.

Consequently, the largest element \(q\) of \(L_m\) satisfies
\[
    q\geq
    \frac{1}{(1+\delta)^m}
    \max\{d:d\mid p-1,\ d\leq T\}.
\]
Since
\[
    (1+\delta)^m
    \leq \exp(\varepsilon/2)
    \leq 1+\varepsilon,
\]
the required approximation follows.

It remains to bound the size of the lists. Consecutive elements of every
\(L_i\) differ by a factor greater than \(1+\delta\), and all elements lie
between \(1\) and \(T\). Therefore,
\[
    |L_i|
    \leq 1+\log_{1+\delta}T
    =
    O\!\left(\frac{(\log p)^2}{\varepsilon}\right).
\]
The lists can therefore be constructed, merged, and trimmed using polynomially
many arithmetic operations on \(O(\log p)\)-bit integers.
\end{proof}

To apply the proposition, take
\[
    T=p^{1/(2c+1)},
\]
the scale at which \(q^c\) and \(\sqrt{p/q}\) are balanced. Applying
Proposition~\ref{prop:approximate-divisor-selection} with target \(T\)
approximates the largest divisor of \(p-1\) below this scale. Applying it with
target \((p-1)/T\), and then taking the complementary divisor, approximates the
smallest divisor above it. For every fixed \(\varepsilon>0\), choosing the
better of these two divisors gives, in time polynomial in \(\log p\), a
running-time bound within a constant factor of the best obtainable from any
divisor of \(p-1\).

Since Shor's algorithm factors \(p-1\) in quantum polynomial time, \(q\) need
not be supplied as part of the input in the quantum setting. The remaining restriction is that
\(p-1\) must possess a divisor of suitable intermediate size for the running
time to beat the square-root bound.

We next ask how often such a divisor exists. Koukoulopoulos \cite{Koukoulopoulos2010} proved that, for
every fixed \(0<\alpha<\beta<1\) in the relevant range, a positive proportion
of primes \(p\) have some divisor \(q\mid p-1\) satisfying
\[
    p^\alpha<q\leq p^\beta.
\]
Thus divisors of the intermediate sizes required by our algorithm occur for a
positive proportion of primes, although it is not clear how large this
proportion is.

\subsection{A remark about Factorials and Integer Factoring}

All known fast classical algorithms for computing factorials are
deterministic. Strassen \cite{Strassen1976} observed that the
baby-step/giant-step algorithm for computing modular factorials can also be
used to solve the integer factoring problem. His method works as follows.
Assume for simplicity that \(N\) is semiprime and factors as
\[
    N=p_1p_2,
\]
where \(p_1<p_2\). Then
\[
    \gcd\!\left(\lfloor\sqrt N\rfloor!\bmod N,N\right)
    =
    p_1.
\]
To see this, notice that
\[
    p_1\le\lfloor\sqrt N\rfloor<p_2.
\]
Applying the baby-step/giant-step procedure defined above allows one to
compute
\[
    \lfloor\sqrt N\rfloor!\pmod N
\]
deterministically in
\[
    \softO(N^{1/4})
\]
time, and thus allows one to factor \(N\) in similar time.

The connection between factorial computation and integer factorization
extends beyond Strassen's algorithm. Lipton \cite{Lipton2010} observed that if
\(n!\) admitted straight-line programs of length \(\polylog(n)\), then integer
factorization would have polynomial-size circuits; if these programs could be
constructed uniformly, the same argument would yield a polynomial-time
factoring algorithm. Indeed, one may compute \(y!\bmod N\) for several values
of \(y\) and use binary search on
\[
    \gcd(y!,N)
\]
to recover a proper factor of \(N\).

Interestingly, Lipton closes his \emph{Factoring and Factorials} chapter with
the following question:
\[
    \textit{Would a (fast) factoring algorithm imply a fast algorithm for
    \(n!\bmod p\)?}
\]
No such implication is currently known. Integer factorization has succinctly
verifiable certificates, whereas no analogous verification procedure for
\(n!\bmod p\) is apparent. In particular, membership of the natural decision
problem in \(PH\) or \(QMA\) remains unclear.

The straight-line complexity of factorials has also been studied in the
Blum--Shub--Smale model \cite{BSS1989}. Shub and Smale
\cite{ShubSmale} considered the relaxed task of computing a nonzero multiple
\[
    m_n n!
\]
by a short straight-line program, and showed that hardness of this task would
imply
\[
    P_{\mathbb C}\neq NP_{\mathbb C}.
\]

Using ideas from Lenstra's elliptic-curve factoring algorithm
\cite{Lenstra1987}, Cheng \cite{Cheng2004} gave, under a conjecture on smooth
integers in short intervals, a randomized construction of such a program of
length
\[
    \exp\!\left(
        O\!\left(\sqrt{\log n\log\log n}\right)
    \right).
\]
This does not directly aid in computing \(n!\bmod p\), since the multiplier
\(m_n\) is not prescribed or recovered and may be divisible by \(p\).

Efficient integer factorization (that indeed is possible in $BQP$) can nevertheless assist, although modestly, in computing modular
factorials over composite moduli. Indeed, given the factorization
\[
    N=\prod_i p_i^{e_i},
\]
the Chinese remainder theorem reduces the computation of \(n!\bmod N\) to
the corresponding computations modulo the prime powers \(p_i^{e_i}\). In
particular, when every exponent satisfies \(e_i\in\{1,2\}\), the algorithms
developed here for prime and prime-square moduli may be applied separately to
the relevant components, subject to the same divisor condition on each
\(p_i-1\). The resulting residues can then be recombined modulo \(N\) using
the Chinese remainder theorem.

The extension from \(p^2\) to higher prime powers is a natural open
direction, to which we return in \Cref{sec:conclusion}.

\subsection{Organization of the Paper}

\Cref{sec:preliminaries} introduces the algebraic and computational
background used throughout the paper. It defines the cyclotomic field \(L=\Q(\zeta_q)\),
its maximal real subfield, the prime ideals above \(p\), and the reduction
maps modulo \(\mathfrak p\) and \(\mathfrak p^2\). It also introduces
multiplicative characters, Gauss sums, Jacobi sums, unit groups,
\(S\)-units, compact representations, and the classical interval-product
algorithm. Finally, it states the three external quantum algorithms used in
the Jacobi-sum reconstruction: the Biasse--Song algorithms for computing
\(S\)-unit groups and solving the principal ideal problem, and the
Van Dam--Seroussi algorithm for estimating Gauss-sum phases.

\Cref{sec:prime-modulus} presents the five-step procedure for recovering
\(J_a(\chi)\) exactly in compact algebraic form and reducing it modulo powers
of \(\mathfrak p\). \Cref{sec:correctness} proves the correctness of this
procedure and analyzes its time and workspace requirements, including the
bit sizes of its inputs, outputs, and intermediate representations. The
resulting Jacobi-sum algorithm is then combined with two classical interval
products to prove \Cref{thm:main}.

\Cref{sec:prime-square} extends the method to computation modulo \(p^2\).
The compact representation of \(J_a(\chi)\) obtained by the prime-modulus
algorithm is reduced modulo \(\mathfrak p^2\), and a central congruence
modulo \(p^2\) relates this reduction to \((aK)!\). The additional harmonic
term in this congruence is evaluated using a roots-of-unity filter, yielding
\Cref{thm:prime-square} with the same asymptotic running time.

\Cref{sec:conclusion} summarizes the limitations of the present method and
records the main open problems.

The appendices contain four calculations whose full proofs would otherwise
interrupt the algorithmic development. \Cref{app:central-congruence}
proves the central congruence modulo \(p\).
While it is a special case of Young's \(p\)-adic
congruences, it is much simpler, and we give a short proof of it for completeness.
\Cref{app:gauss-jacobi} proves the Gauss--Jacobi relation and the norm
identity for \(J_a(\chi)\). \Cref{app:stickelberger} derives the
Stickelberger factorization of the principal ideal generated by the
\(a\)-fold Jacobi sum. Finally, \Cref{app:lifted-central} proves the
central congruence modulo \(p^2\) by specializing Young's
Jacobi--multinomial congruence and applying an elementary block expansion
modulo \(p^2\).

%\ynote{Should change \(c>0\) to \(c_0\).}

\section{Preliminaries}\label{sec:preliminaries}

This section fixes the algebraic notation, computational representations, and
external quantum algorithms used throughout the paper.

\subsection{Number Fields, Ideals, and Reduction}

For a number field \(F\), the notation
\[
    [F:\Q]
\]
denotes its degree over \(\Q\), and \(\OO_F\) denotes its ring of integers.
The group of units of \(\OO_F\) is written \(\OO_F^\times\).

For a nonzero element \(\beta\in\OO_F\), the notation
\[
    (\beta)=\beta\OO_F
\]
denotes the principal ideal generated by \(\beta\). If
\(\mathfrak a\subseteq\OO_F\) is a nonzero ideal, its absolute norm is
\[
    \Norm(\mathfrak a)
    =
    |\OO_F/\mathfrak a|.
\]
For a prime ideal \(\mathfrak q\), we write
\(v_{\mathfrak q}(\beta)\) for the exponent of \(\mathfrak q\) in the
factorization of \((\beta)\).

Let
\[
    \zeta_q:=e^{2\pi i/q},
    \qquad
    L:=\Q(\zeta_q),
    \qquad
    \OO_L:=\Z[\zeta_q].
\]
Complex conjugation sends
\[
    \zeta_q\longmapsto\zeta_q^{-1}
\]
and is denoted by an overline. Its fixed field is the maximal real subfield
\[
    L^+:=\Q(\zeta_q+\zeta_q^{-1}),
\]
whose ring of integers is denoted by \(\OO_{L^+}\). We write \(\mu_q\) for
the group of \(q\)-th roots of unity and \(\mu(L)\) for the group of all
roots of unity contained in \(L\).

Throughout \Cref{sec:prime-modulus} and \Cref{sec:correctness}, let \(q\mid p-1\) and set
\[
    K=\frac{p-1}{q}.
\]
Choose a primitive root \(g\in\F_p^\times\), and let
\[
    \eta=g^{-K}\in\F_p.
\]
Then \(\eta\) has order \(q\). Define
\[
    \mathfrak p:=(p,\zeta_q-\eta)\subseteq\OO_L.
\]
Reduction modulo \(\mathfrak p\) gives the surjective ring homomorphism
\[
    \rho_{\mathfrak p}\colon\OO_L\longrightarrow\F_p,
    \qquad
    \rho_{\mathfrak p}\!\left(f(\zeta_q)\right)
    =
    f(\eta)\pmod p.
\]
Its kernel is \(\mathfrak p\), and hence
\[
    \OO_L/\mathfrak p\simeq\F_p.
\]

For
\[
    t\in(\Z/q\Z)^\times,
\]
define
\[
    \sigma_t(\zeta_q):=\zeta_q^t,
    \qquad
    \mathfrak p_t
    :=
    \sigma_t^{-1}(\mathfrak p)
    =
    (p,\zeta_q-\eta^t).
\]
Since \(q\mid p-1\), the prime \(p\) splits completely in \(L\), and the
ideals \(\mathfrak p_t\) are precisely the prime ideals of \(\OO_L\) above
\(p\).

More generally, since \(\eta\) has order \(q\), it is a root of
\(\Phi_q(X)\) modulo \(p\). As \(p\nmid q\), the polynomial
\(X^q-1\) is square-free modulo \(p\), so \(\eta\) is a simple root of
\(\Phi_q\). Hence, for every integer \(h\ge1\), it lifts uniquely to a root
\(\eta_h\) of \(\Phi_q(X)\) modulo \(p^h\) satisfying
\[
    \eta_h\equiv\eta\pmod p.
\]
Evaluation at \(\eta_h\) gives a surjective ring homomorphism
\[
    \rho_{\mathfrak p^h}\colon
    \OO_L\longrightarrow\Z/p^h\Z,
    \qquad
    \zeta_q\longmapsto\eta_h.
\]
The ideal \(\mathfrak p^h\) is contained in its kernel. Since
\[
    |\OO_L/\mathfrak p^h|
    =
    \Norm(\mathfrak p)^h
    =
    p^h,
\]
the kernel is exactly \(\mathfrak p^h\), and hence
\[
    \OO_L/\mathfrak p^h\simeq\Z/p^h\Z.
\]
For \(h=1\), this is the map \(\rho_{\mathfrak p}\) above. For \(h=2\),
we write \(\widetilde\eta:=\eta_2\). Whenever
\(x=B/C\in L\) with \(B,C\in\OO_L\) and \(\mathfrak p\nmid C\), we also
write
\[
    \rho_{\mathfrak p^h}(x)
    =
    \rho_{\mathfrak p^h}(B)
    \rho_{\mathfrak p^h}(C)^{-1}.
\]
Every \(x\in L^\times\) for which the exponent of \(\mathfrak p\) in the
fractional ideal \((x)\) is nonnegative admits such a presentation. Thus
the same notation is defined for every such \(x\); when this exponent is
zero, \(\rho_{\mathfrak p^h}(x)\) is invertible modulo \(p^h\).

\subsection{Characters, Gauss Sums, and Jacobi Sums}

Define the multiplicative character
\[
    \chi\colon\F_p^\times\longrightarrow\mu_q
\]
by
\[
    \chi(g)=\zeta_q,
\]
and extend it to \(\F_p\) by setting \(\chi(0)=0\).

For a nontrivial multiplicative character \(\psi\) of \(\F_p^\times\),
define its Gauss sum by
\[
    G(\psi)
    :=
    \sum_{x\in\F_p}
    \psi(x)\exp\left(\frac{2\pi i x}{p}\right).
\]
For \(1\le a<q\), define the \(a\)-fold Jacobi sum by
\[
    J_a(\chi)
    :=
    \sum_{\substack{x_1,\ldots,x_a\in\F_p\\
                    x_1+\cdots+x_a=1}}
    \prod_{i=1}^{a}\chi(x_i).
\]
In particular,
\[
    J_1(\chi)=1.
\]

\begin{proposition}[Gauss--Jacobi identities]
\label{prop:gauss-jacobi}
For every \(1\le a<q\),
\[
    J_a(\chi)
    =
    \frac{G(\chi)^a}{G(\chi^a)}
\]
and
\[
    J_a(\chi)\overline{J_a(\chi)}
    =
    p^{a-1}.
\]
\end{proposition}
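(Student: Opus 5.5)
We will prove the first identity by induction on \(a\), using the classical two-character relation \(G(\psi_1)G(\psi_2)=J(\psi_1,\psi_2)G(\psi_1\psi_2)\) whenever \(\psi_1\psi_2\) is nontrivial. We will then obtain the norm identity from \(|G(\psi)|^2=p\) for nontrivial \(\psi\). Throughout, write \(e(x)=\exp(2\pi i x/p)\). Since \(1\le a<q\) and \(\chi\) has order exactly \(q\), every power \(\chi^b\) with \(1\le b\le a\) is nontrivial. This is the one hypothesis the argument needs.

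\textbf{Step 1 (multiplicative expansion).} For \(1\le b\le a\), expand
\[
    G(\chi)^b=\sum_{x_1,\ldots,x_b\in\F_p}\prod_{i}\chi(x_i)\,e(x_1+\cdots+x_b)
\]
and group the terms according to \(s=x_1+\cdots+x_b\).

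For \(s\neq0\), substitute \(x_i=s y_i\). Multiplicativity of \(\chi\) shows that the inner sum equals \(\chi^b(s)J_b(\chi)\).

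For \(s=0\), denote the inner sum by \(J_b^{(0)}(\chi)\), the sum over \(x_1+\cdots+x_b=0\). This gives
\[
    G(\chi)^b=J_b(\chi)\,G(\chi^b)+J_b^{(0)}(\chi),
\]
since \(\sum_{s\neq0}\chi^b(s)e(s)=G(\chi^b)\).

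\textbf{Step 2 (the \(s=0\) sum vanishes).} This is the step needing care. Fix \(x_1,\ldots,x_{b-1}\) and solve for \(x_b\). Alternatively, scale by \(\lambda\in\F_p^\times\): the substitution \(x_i\mapsto\lambda x_i\) preserves the constraint \(x_1+\cdots+x_b=0\) and multiplies the sum by \(\chi^b(\lambda)\). Hence \(J_b^{(0)}(\chi)=\chi^b(\lambda)J_b^{(0)}(\chi)\) for every \(\lambda\). Because \(\chi^b\) is nontrivial, we can choose \(\lambda\) with \(\chi^b(\lambda)\neq1\), which forces \(J_b^{(0)}(\chi)=0\).

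Taking \(b=a\) in Step 1 then gives \(J_a(\chi)=G(\chi)^a/G(\chi^a)\). This also covers \(a=1\) directly, so no induction is actually required. We must still check that \(G(\chi^a)\neq0\); this follows from Step 3.

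\textbf{Step 3 (absolute value of Gauss sums).} For nontrivial \(\psi\), compute
\[
    G(\psi)\overline{G(\psi)}=\sum_{x,y\neq0}\psi(xy^{-1})\,e(x-y).
\]
Substitute \(x=ty\). This gives \(\sum_{t\neq0}\psi(t)\sum_{y\neq0}e((t-1)y)\). The inner sum equals \(p-1\) if \(t=1\) and \(-1\) otherwise. Since \(\sum_t\psi(t)=0\), the total is \(p\).

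Applying this to \(\chi\) and \(\chi^a\), we get
\[
    J_a(\chi)\overline{J_a(\chi)}=\frac{p^a}{p}=p^{a-1}.
\]
Here we use that complex conjugation commutes with the quotient.

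The main obstacle is Step 2: the degenerate \(s=0\) contribution must be shown to vanish, and that is exactly where the hypothesis \(a<q\) is used. Everything else is routine character-sum manipulation.
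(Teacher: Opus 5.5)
Your proof is correct and follows the paper's argument. You expand \(G(\chi)^a\), group the terms by \(s=x_1+\cdots+x_a\), rescale by \(s\) when \(s\neq0\), kill the \(s=0\) term using that \(\chi^a\) is nontrivial, and read off the norm identity from \(|G(\psi)|^2=p\). Beyond the paper, you supply the scaling argument for the vanishing \(s=0\) sum and the direct computation of \(|G(\psi)|^2\), both of which the paper asserts or cites; the induction via the two-character relation that you announce at the start is never used and can be deleted.
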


The proof is given in \Cref{app:gauss-jacobi}.

\begin{proposition}[Stickelberger factorization]
\label{prop:stickelberger-factorization}
For every \(1\le a<q\),
\[
    \bigl(J_a(\chi)\bigr)
    =
    \prod_{t\in(\Z/q\Z)^\times}
    \mathfrak p_t^{\left\lfloor at/q\right\rfloor}.
\]
\end{proposition}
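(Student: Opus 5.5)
The plan is to reduce the ideal-theoretic statement to a valuation computation at one prime at a time, using the Gauss--Jacobi identity of \Cref{prop:gauss-jacobi} and the classical Stickelberger congruence for Gauss sums. First I would record that \(J_a(\chi)\) divides a power of \(p\) in \(\OO_L\). This follows from \(J_a(\chi)\overline{J_a(\chi)}=p^{a-1}\), and \(J_a(\chi)\in\OO_L\) because it is an integer combination of values of \(\chi\). Hence \((J_a(\chi))\) is supported only on the primes \(\mathfrak p_t\) above \(p\), and it suffices to show
\[
    v_{\mathfrak p_t}\bigl(J_a(\chi)\bigr)=\lfloor at/q\rfloor
\]
for each \(t\in(\Z/q\Z)^\times\), where \(t\) is represented in \(\{1,\dots,q-1\}\). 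Since \(\mathfrak p_t=\sigma_t^{-1}(\mathfrak p)\), we have \(v_{\mathfrak p_t}(x)=v_{\mathfrak p}(\sigma_t x)\). Moreover \(\sigma_t J_a(\chi)=J_a(\chi^t)\), because \(\sigma_t\) acts on character values by \(\zeta_q\mapsto\zeta_q^t\). So the task becomes computing \(v_{\mathfrak p}(J_a(\chi^t))\) for \(1\le t<q\) coprime to \(q\).

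Next I would pass to the larger field \(M=\Q(\zeta_q,\zeta_p)\), where the Gauss sums live. I would fix a prime \(\mathfrak P\) of \(M\) above \(\mathfrak p\). Since \(p\) splits completely in \(L\) and is totally ramified in \(\Q(\zeta_p)\), the prime \(\mathfrak P\) has ramification index \(p-1\) over \(\mathfrak p\). A uniformizer is \(\pi=\zeta_p-1\), and \(\OO_M/\mathfrak P\simeq\F_p\). Under the reduction, \(\chi(x)\) maps to \(\rho_{\mathfrak p}(\chi(x))\). Because \(\chi(g)=\zeta_q\mapsto\eta=g^{-K}\), this gives \(\chi(x)\equiv x^{-K}\) modulo \(\mathfrak P\). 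In other words, \(\chi=\omega^{-K}\), where \(\omega\) is the Teichmüller character attached to \(\mathfrak P\). The classical Stickelberger congruence
\[
    G(\omega^{-k})\equiv-\frac{\pi^{k}}{k!}\pmod{\mathfrak P^{k+1}},\qquad 0\le k<p-1,
\]
then gives \(v_{\mathfrak P}(G(\chi^s))=K\cdot(s\bmod q)\) for \(s\not\equiv0\pmod q\). Indeed, \(\chi^s=\omega^{-K(s\bmod q)}\) and \(0<K(s\bmod q)<p-1\). I would include a short proof of this congruence by expanding \(\zeta_p^{x}=(1+\pi)^x\) binomially and using \(\sum_{x\in\F_p^\times}x^{j}\equiv0\) unless \((p-1)\mid j\). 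I expect this to be the main technical step, chiefly to get the sign and Teichmüller conventions consistent with the paper's choice \(\eta=g^{-K}\).

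Finally, I would combine these using \(J_a(\chi^t)=G(\chi^t)^a/G(\chi^{at})\), which is \Cref{prop:gauss-jacobi} applied to the character \(\chi^t\), also of order \(q\). The hypothesis \(a<q\) and \(\gcd(t,q)=1\) ensure \(at\not\equiv0\pmod q\). This yields
\[
    v_{\mathfrak P}\bigl(J_a(\chi^t)\bigr)=aKt-K\,(at\bmod q)=Kq\lfloor at/q\rfloor=(p-1)\lfloor at/q\rfloor.
\]
Dividing by the ramification index \(p-1\) gives \(v_{\mathfrak p}(J_a(\chi^t))=\lfloor at/q\rfloor\), and hence the claimed factorization. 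As a sanity check, the total norm is \(\sum_t\lfloor at/q\rfloor\), which should match \(p^{a-1}\) from the norm identity. The case \(t=1\) also agrees with the central congruence \eqref{eq:central}, since there \(J_a(\chi)\) is a \(\mathfrak p\)-unit.
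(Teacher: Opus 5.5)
Your proof is correct, and it takes a different route from the paper's. The paper never computes Gauss-sum valuations. It factors $J_a(\chi)=\prod_{j=1}^{a-1}J(\chi^j,\chi)$ (\Cref{lem:multifold-jacobi-product}), then imports the two-character exponents $\lfloor (j+1)t/q\rfloor-\lfloor jt/q\rfloor$ from \cite[Proposition~4(ii)]{Katre2000} after translating sign and Galois-action conventions, and finally telescopes over $j$. You go straight from $J_a(\chi^t)=G(\chi^t)^a/G(\chi^{at})$ to $\mathfrak P$-adic valuations in $\Q(\zeta_q,\zeta_p)$. This makes the two-character decomposition unnecessary, and the telescoping becomes the single identity $aKt-K(at\bmod q)=(p-1)\lfloor at/q\rfloor$. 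In effect you reprove the ingredient the paper cites. The paper's version is shorter, but it rests on a convention translation it does not spell out. Yours is self-contained and derives the orientation of the exponents directly from the choice $\eta=g^{-K}$; this matters because the whole algorithm relies on $\mathfrak p=\mathfrak p_1$ having exponent $0$. Your check against \eqref{eq:central} confirms that orientation.

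One step in your proposed proof of the Gauss-sum congruence needs more than you state. Write $G(\chi^s)=\sum_{j}\pi^jS_j$ with $S_j=\sum_{x=1}^{p-1}\chi^s(x)\binom{x}{j}$. For $1\le j<k$, the power-sum vanishing only gives $S_j\equiv0\pmod{\mathfrak P}$, hence $\pi^jS_j\in\mathfrak P^{j+1}$, which falls short of $\mathfrak P^{k+1}$. The fix is that $S_j\in\OO_L$, because $\chi^s$ is $\mu_q$-valued and $\binom{x}{j}\in\Z$. Hence $S_j\in\mathfrak P\cap\OO_L=\mathfrak p$, and since $\mathfrak P$ has ramification index $p-1$ over $\mathfrak p$, you get $v_{\mathfrak P}(S_j)\ge p-1>k-j$. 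If you state the congruence for $\mu_q$-valued characters congruent to $x^{-k}$ modulo $\mathfrak P$, you also avoid the Teichm\"uller character, whose values need not lie in $\Q(\zeta_q,\zeta_p)$. Finally, in your sanity check, the absolute norm $p^{\sum_t\lfloor at/q\rfloor}=p^{(a-1)\varphi(q)/2}$ should be compared with $\Norm_{L^+/\Q}(p^{a-1})$, not with $p^{a-1}$.
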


A derivation from the standard two-character Stickelberger factorization is
given in \Cref{app:stickelberger}; see also
\cite[Proposition~4(ii)]{Katre2000}.

\subsection{Unit Groups, \texorpdfstring{\(S\)}{S}-Units, and Compact Representations}

Dirichlet's unit theorem gives decompositions
\[
    \OO_L^\times
    =
    \mu(L)\times
    \langle\epsilon_1,\ldots,\epsilon_r\rangle
\]
and
\[
    \OO_{L^+}^\times
    =
    \{\pm1\}\times
    \langle e_1,\ldots,e_r\rangle,
\]
where
\[
    r=\frac{\varphi(q)}{2}-1,
\] and \(\varphi(\cdot)\) is the Euler totient function.

The units
\[
    \epsilon_1,\ldots,\epsilon_r
    \qquad\text{and}\qquad
    e_1,\ldots,e_r
\]
are fundamental-unit bases for the free parts of the two unit groups.

The relative norm from \(L\) to \(L^+\) is
\[
    \Norm_{L/L^+}(\beta)
    =
    \beta\overline\beta.
\]
In particular, the relative norm maps units of \(\OO_L\) to units of
\(\OO_{L^+}\).

Let \(F\) be a number field and let \(S\) be a finite set of prime ideals of
\(\OO_F\). The \(S\)-unit group is
\[
    \OO_{F,S}^\times
    =
    \left\{
        \beta\in F^\times:
        v_{\mathfrak q}(\beta)=0
        \text{ for every }\mathfrak q\notin S
    \right\}.
\]
When \(S=\varnothing\), this is the ordinary unit group:
\[
    \OO_{F,\varnothing}^\times=\OO_F^\times.
\]

A compact representation is a symbolic power product
\[
    \prod_{j=1}^{m}\beta_j^{z_j},
\]
where each \(\beta_j\in F^\times\) is specified exactly by rational
coordinates in the fixed integral basis of \(F\), and \(z_j\in\Z\). The size
of the representation is the total description length of this list. We use
only compact representations of polynomial size. Such a representation may
nevertheless describe an algebraic number of very large height, since the
product is not expanded.

\subsection{Computational Representations}

The number-field algorithms used below require explicit descriptions of their
input fields, rings, and ideals.

A number field is presented as
\[
    F=\Q(\theta)=\Q[X]/(f(X)),
\]
where \(f\in\Z[X]\) is monic and irreducible. Its ring of integers is
supplied by an integral basis
\[
    \OO_F
    =
    \Z\omega_1\oplus\cdots\oplus\Z\omega_{[F:\Q]}.
\]
Each \(\omega_i\) is represented as a polynomial in \(\theta\) of degree
less than \([F:\Q]\), with rational coefficients.

An ideal \(\mathfrak a\subseteq\OO_F\) is represented by a
\(\Z\)-basis relative to the chosen integral basis, usually in Hermite normal
form. When an ideal factorization is already available, we also use the
factored representation
\[
    \mathfrak a
    =
    \prod_i\mathfrak q_i^{m_i},
\]
stored as the list
\[
    \{(\mathfrak q_i,m_i)\}_i.
\]

Let \(\Phi_q(X)\in\Z[X]\) denote the \(q\)-th cyclotomic polynomial,
\[
    \Phi_q(X)
    :=
    \prod_{t\in(\Z/q\Z)^\times}
    \left(X-\zeta_q^t\right).
\]
Equivalently, \(\Phi_q(X)\) is the minimal polynomial of
\(\zeta_q\) over \(\Q\).

For the cyclotomic field, we use the presentation
\[
    L
    \cong
    \Q[X]/(\Phi_q(X)),
    \qquad
    \zeta_q=X\bmod\Phi_q(X),
\]
together with the integral basis
\[
    1,\zeta_q,\ldots,\zeta_q^{\varphi(q)-1}
\]
of \(\OO_L\).

For the maximal real subfield, let
\[
    \theta=\zeta_q+\zeta_q^{-1},
\]
and let \(\Psi_q(Y)\in\Z[Y]\) be the monic polynomial characterized by
\[
    \Phi_q(X)
    =
    X^{\varphi(q)/2}\Psi_q(X+X^{-1}).
\]
Then
\[
    L^+
    \cong
    \Q[Y]/(\Psi_q(Y)),
    \qquad
    \theta=Y\bmod\Psi_q(Y),
\]
and
\[
    \OO_{L^+}=\Z[\theta].
\]
Thus
\[
    1,\theta,\ldots,\theta^{\varphi(q)/2-1}
\]
is an integral basis of \(\OO_{L^+}\); see
\cite[Chapter~2]{Washington1997}. The two defining polynomials and the two
integral bases can be constructed from \(q\) in \(q^{O(1)}\) bit operations
and have total description length \(q^{O(1)}\).

The prime ideal
\[
    \mathfrak p_t=(p,\zeta_q-\eta^t)
\]
may be supplied by this two-generator description. Relative to the integral
basis of \(\OO_L\), one may equivalently use the \(\Z\)-basis
\[
    p,\quad
    \zeta_q-\eta^t,\quad
    \zeta_q^2-\eta^{2t},\quad
    \ldots,\quad
    \zeta_q^{\varphi(q)-1}
        -\eta^{(\varphi(q)-1)t}.
\]

\subsection{Reducing Compact Representations}

The factors in a compact representation may contribute positive or
negative powers of \(\mathfrak p\), even when those powers cancel in the
represented algebraic integer. The next lemma makes this cancellation
effective without expanding the product.

\begin{lemma}[Reduction of a compact representation]
\label{lem:compact-reduction}
Let \(h\in\{1,2\}\), and let
\[
    \alpha=\prod_{i=1}^{m}\beta_i^{z_i}\in\OO_L
\]
be nonzero and given by an exact compact representation. Suppose that
\(\mathfrak p\nmid(\alpha)\). Then
\(\rho_{\mathfrak p^h}(\alpha)\) can be computed deterministically in
time polynomial in \(q\), \(\log p\), and the size of the compact
representation.
\end{lemma}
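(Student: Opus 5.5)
The plan is to strip the \(\mathfrak p\)-adic valuation out of each factor \(\beta_i\) by dividing by a fixed element of \(\mathfrak p\)-valuation one. Then every factor becomes a \(\mathfrak p\)-unit, its image under \(\rho_{\mathfrak p^h}\) is invertible, and the product can be evaluated in \(\Z/p^h\Z\) by fast modular exponentiation with the integer exponents \(z_i\).

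\textbf{Step 1: a uniformizer.} Take \(\pi:=\zeta_q-\widetilde\eta\), where \(\widetilde\eta\) is an integer lift of \(\eta_2\). Then \(\rho_{\mathfrak p^2}(\pi)=0\), so \(\pi\in\mathfrak p^2\), and the valuation is not one. The fix is to take \(\pi:=\zeta_q-\eta'\), where \(\eta'\) is the integer lift \(\eta_2+p\). Now \(\rho_{\mathfrak p^2}(\pi)=-p\), which gives \(v_{\mathfrak p}(\pi)=1\).

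\textbf{Step 2: valuations of the factors.} Write each \(\beta_i=B_i/d_i\) with \(B_i\in\OO_L\) and \(d_i\in\Z\). Since \(p\) may divide \(d_i\), set \(v_{\mathfrak p}(d_i)=v_p(d_i)\), because \(p\) is unramified. To compute \(v_i:=v_{\mathfrak p}(B_i)\):
\begin{itemize}
\item compute \(\Norm(B_i)\) as a resultant with \(\Phi_q\); it has \(q^{O(1)}\cdot\)size bits;
\item this gives \(v_{\mathfrak p}(B_i)\le v_p(\Norm B_i)\), a polynomially bounded number;
\item repeatedly test whether \(\rho_{\mathfrak p}(B_i)=0\); if so, replace \(B_i\) by \(B_i/\pi\).
\end{itemize}
For the division, the better choice is to multiply by \(\pi^{-1}\in\mathfrak p^{-1}\), using \(\pi^{-1}=\pi^*/\Norm(\pi)\). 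Here \(\pi^*=\prod_{t\ne1}\sigma_t(\pi)\) is the product of the other conjugates, and it is a \(\mathfrak p\)-unit, because \(p\) splits completely and the other conjugates lie in the other primes \(\mathfrak p_t\). Concretely, \(B_i\pi^*/p^{v}\) with \(v=v_p(\Norm\pi)\) is computed exactly by integer division. The factor \(\pi^*\) has degree \(<\varphi(q)\) in \(\zeta_q\) with coefficients of \(q^{O(1)}\log p\) bits, so it is precomputable.

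A cleaner alternative avoids per-step division entirely. Multiply \(B_i\) by \((\pi^*)^{k}\) and divide by \(p^{k}\) only when the integer coefficients permit. Either way, after at most \(O(q\log(\cdot))\) steps we reach \(B_i=\pi^{v_i}u_i\), with \(u_i\) a \(\mathfrak p\)-unit of controlled size (the size can be kept bounded by reducing modulo \(p^{h+\text{const}}\) once the valuations are known).

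\textbf{Step 3: assembly.} We have \(\alpha=\pi^{V}\prod_i (u_i/d_i')^{z_i}\), where \(V=\sum z_i(v_i-v_p(d_i))\). Since \(\mathfrak p\nmid(\alpha)\), we get \(V=0\), which also serves as a consistency check. Hence
\[
\rho_{\mathfrak p^h}(\alpha)=\prod_i \rho_{\mathfrak p^h}(u_i)^{z_i}\,\rho_{\mathfrak p^h}(d_i')^{-z_i},
\]
where \(d_i'\) is the \(p\)-free part of \(d_i\) (with \(p^{v_p(d_i)}\) absorbed into the bookkeeping). Each \(\rho_{\mathfrak p^h}(u_i)\) is computed by evaluating at \(\eta_h\) modulo \(p^h\), and \(\eta_h\) is obtained by one Hensel step. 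Each power costs \(O(\log|z_i|)\) multiplications.

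\textbf{Main obstacle.} The delicate step is Step 2: extracting the \(\mathfrak p\)-part of \(B_i\) exactly while keeping coefficient growth polynomial. One needs that the repeated multiplication by \(\pi^*\) followed by exact division by \(p\) is well-defined, which holds because \(\mathfrak p\cdot\prod_{t\ne1}\mathfrak p_t=(p)\). One also needs the resulting \(u_i\) to be determined modulo \(\mathfrak p^h\) only, and this allows working with residues modulo \(p^{h+v_i}\) throughout.
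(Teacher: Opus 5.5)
Your approach is sound and shares the paper's skeleton. Remove a uniformizer power from each factor so that it becomes a $\mathfrak p$-unit. Use that the $z_i$-weighted $\mathfrak p$-exponents of the factors sum to $v_{\mathfrak p}(\alpha)=0$ to drop the uniformizer. Then evaluate at $\eta_h$ and exponentiate modulo $p^h$. Where you differ is in how the valuation and unit part are extracted.

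The paper takes $p$ itself as the uniformizer ($v_{\mathfrak p}(p)=1$ because $p$ is unramified) and never divides inside $\OO_L$. It bounds $r_i=v_{\mathfrak p}(A_i)<C$ via $p^{r_i}\le|\Norm(A_i)|\le M_i^d$. It then evaluates $A_i$ once at the lift $\eta_N$ modulo $p^N$, with $N=C+h$. It reads $r_i$ off as the $p$-adic valuation of the integer $y_i$, using $\ker\rho_{\mathfrak p^r}=\mathfrak p^r$, and finally divides the integer $y_i$ by $p^{r_i}$.

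You instead divide exactly in $\OO_L$, using $\pi^*\in\prod_{t\ne1}\mathfrak p_t$ with $\pi^*\notin\mathfrak p$ to make division by $p$ exact. This forces a coefficient-growth argument. Yours is adequate: each step adds $q^{O(1)}\log p$ bits, and there are at most $v_p(\Norm B_i)$ steps. The paper's route is shorter and needs no auxiliary element. Yours yields an explicit factorization of $B_i$ inside $\OO_L$, which is more than the lemma needs.

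Two bookkeeping steps are wrong as written and must be repaired.

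\textbf{First}, $B_i\pi^*/p$ is not $B_i/\pi$. Here $\Norm(\pi)=\pm\Phi_q(\eta')=p\cdot m$ with $m$ a large integer prime to $p$, so each step also multiplies by $m$. After $v_i$ steps you hold $m^{v_i}B_i/\pi^{v_i}$, not $B_i/\pi^{v_i}$.

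\textbf{Second}, the denominators contribute $p^{s_i}$, not $\pi^{s_i}$, so $\beta_i\ne\pi^{v_i-s_i}u_i/d_i'$. Converting introduces a factor $\rho_{\mathfrak p^h}(p/\pi)^{-s_iz_i}$, which your displayed formula omits even though you defer it to ``bookkeeping''. Since $\rho_{\mathfrak p}(p/\pi)=-1$ (because $\eta_\infty-\eta'\equiv-p\pmod{p^2}$ in $\Z_p$), and $\sum_i s_iz_i$ need not vanish, the formula can fail already modulo $p$.

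Both issues disappear if you use $p$ as the uniformizer and $\pi^*$ only as a multiplier. Then $u_i':=B_i(\pi^*)^{v_i}/p^{v_i}$ is an integral $\mathfrak p$-unit, and $p^{-(v_i-s_i)}\beta_i=u_i'/\bigl((\pi^*)^{v_i}d_i'\bigr)$. Hence
\[
\rho_{\mathfrak p^h}(\alpha)=\prod_{i=1}^m\Bigl(\rho_{\mathfrak p^h}(u_i')\,\rho_{\mathfrak p^h}(\pi^*)^{-v_i}\,(d_i')^{-1}\Bigr)^{z_i}\pmod{p^h}.
\]
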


\begin{proof}
Set \(d=\varphi(q)\). For each \(i\), let \(D_i>0\) be the least common
denominator of the coordinates of \(\beta_i\), and write
\[
    \beta_i=\frac{A_i}{D_i},
    \qquad
    A_i=\sum_{j=0}^{d-1}c_{i,j}\zeta_q^j\in\OO_L.
\]
Set
\[
    M_i=\sum_{j=0}^{d-1}|c_{i,j}|,
    \qquad
    C=1+\max_i d\left\lceil\log_2 M_i\right\rceil.
\]
Let \(r_i\) be the exponent of \(\mathfrak p\) in \((A_i)\). Since
\(\Norm(\mathfrak p)=p\),
\[
    p^{r_i}
    \le
    \left|\Norm_{L/\Q}(A_i)\right|
    \le
    M_i^d,
\]
where the second inequality follows because every conjugate of \(A_i\) has
absolute value at most \(M_i\). Hence \(r_i<C\).

Set \(N=C+h\), lift \(\eta\) to \(\eta_N\) modulo \(p^N\), and compute
\[
    y_i
    =
    \sum_{j=0}^{d-1}c_{i,j}\eta_N^j
    \pmod{p^N}.
\]
For every \(0\le r\le N\), the reduction of \(\eta_N\) modulo \(p^r\) is
\(\eta_r\). Since \(\ker(\rho_{\mathfrak p^r})=\mathfrak p^r\),
\[
    y_i\equiv0\pmod{p^r}
    \quad\Longleftrightarrow\quad
    A_i\in\mathfrak p^r.
\]
Thus \(r_i\) is exactly the exponent of \(p\) dividing \(y_i\).

Write
\[
    D_i=p^{s_i}D_i',
    \qquad
    p\nmid D_i'.
\]
Since \(\mathfrak p\) occurs with exponent one in \((p)\), its exponent
in the fractional ideal \((\beta_i)\) is
\[
    e_i=r_i-s_i.
\]
Moreover, \(N-r_i>h\), so division by \(p^{r_i}\) determines
\(y_i/p^{r_i}\) modulo \(p^h\). Define
\[
    u_i
    =
    \frac{y_i}{p^{r_i}}(D_i')^{-1}
    \pmod{p^h}.
\]
The element \(p^{-e_i}\beta_i\) has \(\mathfrak p\)-exponent zero.
By the construction of \(u_i\) and the preceding extension,
\[
    \rho_{\mathfrak p^h}\!\left(p^{-e_i}\beta_i\right)
    =
    u_i.
\]Since exponents
in principal ideals add under multiplication, the exponent of
\(\mathfrak p\) in \((\alpha)\) is
\[
    \sum_{i=1}^{m}z_i e_i=0.
\]
Consequently,
\[
    \alpha
    =
    \prod_{i=1}^{m}\left(p^{-e_i}\beta_i\right)^{z_i},
\]
and therefore
\[
    \rho_{\mathfrak p^h}(\alpha)
    =
    \prod_{i=1}^{m}u_i^{z_i}
    \pmod{p^h}.
\]
Each \(u_i\) is invertible modulo \(p^h\), so negative exponents are handled
by modular inversion.

Finally, \(C\) is polynomially bounded in \(q\) and in the size of the exact
factor descriptions, and \(p^N\) has polynomial bit length. Lifting
\(\eta\) modulo \(p^N\), carrying out the evaluations above, and performing
the final modular exponentiations therefore have
the claimed polynomial complexity.
\end{proof}

\subsection{Fast Interval Products}

We use the following standard baby-step/giant-step product routine.

\begin{lemma}[Bostan--Gaudry--Schost \cite{BGS}]
\label{lem:interval-products}
Let \(R\) be either \(\F_p\) or \(\Z/p^2\Z\). Given \(A\in R\) and an
integer \(0\le M<p\), the interval product
\[
    \prod_{j=1}^{M}(A+j)
\]
can be computed using
\[
    \softO(\sqrt M)
\]
operations in \(R\).
\end{lemma}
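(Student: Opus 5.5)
The statement is \Cref{lem:interval-products}, the Bostan--Gaudry--Schost interval product. I would prove it by the baby-step/giant-step scheme from the introduction, carried out over a general ring \(R\in\{\F_p,\Z/p^2\Z\}\). The main point is to see that only ring operations plus inversion of small integers are needed, and that these integers are units in \(R\).

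\textbf{Reduction and block polynomial.} Set \(s=\lceil\sqrt M\rceil\) and write \(M=ks+\ell\) with \(k\le s\) and \(0\le\ell<s\). The tail \(\prod_{j=ks+1}^{M}(A+j)\) has fewer than \(s\) factors, so I multiply it out directly in \(O(\sqrt M)\) operations. For the main part, define
\[
    f(X)=\prod_{j=1}^{s}(X+j)\in R[X].
\]
I build \(f\) with a subproduct tree, using fast multiplication in \(R[X]\). Schönhage--Strassen/Cantor--Kaltofen-type multiplication works over any commutative ring, so this costs \(\softO(s)\) ring operations. Then
\[
    \prod_{j=1}^{ks}(A+j)=\prod_{i=0}^{k-1}f(A+is).
\]

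\textbf{Multipoint evaluation.} I evaluate \(f\) at the \(k\le s\) points \(A+is\) by a remainder tree: build the subproduct tree of \((X-A-is)\), then reduce \(f\) down the tree. Every modulus in this tree is monic, so division with remainder is valid over any commutative ring, including \(\Z/p^2\Z\). The cost is \(\softO(s)\) ring operations. Multiplying the \(k\) values together adds \(O(s)\) more, giving \(\softO(\sqrt M)\) in total.

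\textbf{Polylog refinement and the main obstacle.} Naive evaluation as above loses only logarithmic factors, and \(\softO\) absorbs them, so this already proves the lemma as stated. If one wants the sharper Bostan--Gaudry--Schost bound, the shifting technique of \cite{BGS} recursively computes the values \(f(A+is)\) from values on a shifted arithmetic progression. This requires inverting integers up to about \(s\), all of which are below \(p\) since \(s\le\sqrt p<p\). They are therefore invertible in \(\F_p\) and in \(\Z/p^2\Z\). The only delicate point is confirming that no step divides by \(p\); I expect this to be the main obstacle for the \(\Z/p^2\Z\) case. Since every algorithm used relies only on monic divisions and inversions of integers \(<p\), it goes through verbatim over \(\Z/p^2\Z\).
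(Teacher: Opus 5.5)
Your proposal is correct and takes essentially the same approach as the paper, which cites \cite{BGS} and adds only the remark that the product-tree and multipoint-evaluation construction divides solely by monic polynomials and therefore carries over verbatim to \(\Z/p^2\Z\). Your explicit block polynomial \(f(X)=\prod_{j=1}^{s}(X+j)\), evaluated at the points \(A+is\), supplies the details the paper leaves implicit, and you correctly observe that the logarithmic losses of the plain remainder-tree version are absorbed by \(\softO\), so the BGS shifting refinement is not needed for the lemma.
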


The starting point \(A\) is arbitrary, so the lemma applies to every interval
of \(M\) consecutive elements. The construction uses a product tree and fast
multipoint evaluation. Since the polynomial divisions in the remainder tree
are by monic polynomials, the same procedure applies over
\(\Z/p^2\Z\) as over \(\F_p\).

\subsection{Algorithmic Black Boxes}

The Jacobi-sum reconstruction uses three external quantum algorithms. The
arbitrary-degree unit-group algorithm is due to Eisentr\"ager, Hallgren,
Kitaev, and Song \cite{EisentragerHallgrenKitaevSong2014}. Biasse and Song
extended this framework to \(S\)-unit groups and the principal ideal problem
in their SODA paper \cite{BiasseSong2016}. We use the detailed version
\cite{BiasseSong2025} for the exact input-output conventions and quantitative
bounds stated below.

\subsubsection{Computing \(S\)-Unit Groups}

\begin{theorem}[Biasse--Song]\label{thm:s-unit-black-box}
The input consists of
\[
    (F,\OO_F,S),
\]
where \(F\) is supplied by a defining polynomial, \(\OO_F\) is supplied by
an integral basis, and \(S\) is a finite list of prime ideals of \(\OO_F\).
A bounded-error quantum algorithm computes compact generators of the
\(S\)-unit group
\[
    \OO_{F,S}^\times.
\]
Its running time is polynomial in
\[
    [F:\Q],
    \qquad
    \log|\Delta_F|,
    \qquad
    |S|,
    \qquad
    \max_{\mathfrak q\in S}\log\Norm(\mathfrak q).
\]
\end{theorem}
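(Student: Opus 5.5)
The statement is not new to this paper; it is the quantitative form of the Biasse--Song \(S\)-unit theorem \cite{BiasseSong2016}, with the conventions of \cite{BiasseSong2025}. So the plan is mainly to check that our input format and parameters match theirs, and to recall the structure of their argument so the stated dependence on the parameters can be checked. First I will verify the input format. \(F\) is given by a monic defining polynomial together with an integral basis of \(\OO_F\), and each \(\mathfrak q\in S\) is given by a \(\Z\)-basis in Hermite normal form. This is exactly the input assumed in \cite{BiasseSong2025}. The bit size of this input is polynomial in \([F:\Q]\), \(\log|\Delta_F|\), \(|S|\) and \(\max_{\mathfrak q\in S}\log\Norm(\mathfrak q)\), since an HNF basis of \(\mathfrak q\) has entries bounded by \(\Norm(\mathfrak q)\).

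The core of the argument is a hidden subgroup problem over the group \(G=\Z^{|S|}\times\R^{r_1+r_2}\). For \((v,x)\in G\) one considers the fractional ideal
\[
    \prod_{\mathfrak q\in S}\mathfrak q^{v_{\mathfrak q}},
\]
twisted by the real scaling \(e^{x}\) at the archimedean places. The oracle maps \((v,x)\) to a canonical reduced representation of the resulting ideal lattice, obtained by a quantum-friendly discretization of that lattice. Two points \((v,x)\) and \((v',x')\) give the same lattice exactly when they differ by \(\bigl((v_{\mathfrak q}(\beta))_{\mathfrak q},\,\mathrm{Log}(\beta)\bigr)\) for some \(\beta\in\OO_{F,S}^\times\). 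Hence the hidden period lattice is the \(S\)-logarithmic image of the \(S\)-unit group. The continuous-HSP algorithm of Eisentr\"ager--Hallgren--Kitaev--Song \cite{EisentragerHallgrenKitaevSong2014} then returns an approximate basis of this lattice. Its running time is polynomial in the dimension, in the logarithm of the covolume (which is bounded via the regulator, \(\log|\Delta_F|\) and \(\sum\log\Norm(\mathfrak q)\)), and in the Lipschitz and separation constants of the oracle. The torsion part \(\mu(F)\) is computed separately and classically; its order is polynomial in \([F:\Q]\).

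The final step converts each approximate vector into an exact element written as a compact power product \(\prod\beta_j^{z_j}\). The method is the standard one: follow a doubling chain \(x\mapsto 2x\), and at each stage recover a small reduced element \(\beta_j\) by LLL in the corresponding ideal lattice. This yields \(O(\log)\) many factors, each with polynomially bounded coordinates, which matches the notion of compact representation fixed earlier.

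The step I expect to be the main obstacle is the quantitative one. It requires bounding the oracle's Lipschitz constant and its separation parameter by quantities polynomial in the four listed parameters. It also requires showing that the precision delivered by the HSP algorithm suffices for the rounding in the compact-representation step to be exact. For these bounds I would not re-derive the estimates. Instead I would cite the explicit bounds in \cite{BiasseSong2025} and check that every parameter they depend on is dominated by the four quantities in the statement.
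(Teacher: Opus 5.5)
Your proposal is correct and takes the same route as the paper, which gives no independent argument and simply identifies the statement with \cite[Theorem~1]{BiasseSong2025}; your sketch of the hidden-subgroup and compact-representation internals is background that the paper does not attempt. One small caution: your claim that the input bit size is polynomial in the four listed parameters does not hold for an arbitrary defining polynomial, since its coefficients can be huge for a fixed field, so it should be read as an implicit assumption of the statement; this is harmless here because the paper only applies the theorem to the explicit presentations of \(L\) and \(L^+\) with \(S=\varnothing\).
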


This is \cite[Theorem~1]{BiasseSong2025}. When \(S=\varnothing\), the output
describes the ordinary unit group \(\OO_F^\times\) by its roots of unity and
a compact fundamental-unit basis. In the main algorithm, we apply the theorem
to
\[
    (L,\OO_L,\varnothing)
    \qquad\text{and}\qquad
    (L^+,\OO_{L^+},\varnothing).
\]

\subsubsection{The Principal Ideal Problem}

\begin{theorem}[Biasse--Song]\label{thm:pip-black-box}
The input consists of an explicitly presented ring of integers \(\OO_F\)
and an ideal
\[
    \mathfrak a\subseteq\OO_F.
\]
If \(\mathfrak a\) is principal, a bounded-error quantum algorithm returns a
compact representation of an element \(\gamma\in F\) satisfying
\[
    (\gamma)=\mathfrak a.
\]
Its running time is polynomial in the degree and logarithmic discriminant of
\(F\), and in the bit size of the input ideal.
\end{theorem}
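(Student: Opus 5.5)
This statement is imported from the literature, so the primary route is to cite the detailed version \cite{BiasseSong2025}, which states the principal ideal problem result with these input--output conventions. In case the reader wants to see why it follows from the other black box, I would also derive it from \Cref{thm:s-unit-black-box} as follows.

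\textbf{Step 1: factor \(\mathfrak a\).} Compute \(\Norm(\mathfrak a)\) from its Hermite normal form basis as a determinant. Factor this integer with Shor's algorithm \cite{Shor}. For each rational prime \(\ell\) dividing the norm, split \(\ell\OO_F\) into prime ideals; this is polynomial time given a factoring oracle, for example by quantum factoring of \(\ell\)-adic polynomial data or by standard Buchmann--Lenstra splitting. Then compute the exponents \(m_i\) in \(\mathfrak a=\prod_i\mathfrak q_i^{m_i}\). Let \(S=\{\mathfrak q_i\}\). Then \(|S|\) and \(\max\log\Norm(\mathfrak q_i)\) are bounded by the bit size of \(\mathfrak a\).

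\textbf{Step 2: compute and solve the valuation system.} Apply \Cref{thm:s-unit-black-box} to \((F,\OO_F,S)\) to obtain compact generators \(u_1,\ldots,u_k\) of \(\OO_{F,S}^\times\). For each \(u_j\), compute the valuation vector \(v(u_j)=(v_{\mathfrak q_i}(u_j))_i\in\Z^{|S|}\). I would do this without expanding \(u_j\): sum the valuations of the factors of its compact representation, each obtained as in the proof of \Cref{lem:compact-reduction} by evaluating modulo a sufficiently high power of the prime below \(\mathfrak q_i\).

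The ideal \(\mathfrak a\) is principal exactly when its generator lies in \(\OO_{F,S}^\times\), that is, exactly when the vector \(m=(m_i)\) lies in the lattice spanned by the columns \(v(u_j)\). So solve \(Vz=m\) over \(\Z\) using Hermite normal form and a unimodular transformation. Output the compact representation \(\gamma=\prod_j u_j^{z_j}\), which satisfies \((\gamma)=\mathfrak a\) by construction.

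\textbf{Main obstacle: the size bounds.} Two things must be controlled. First, the entries of \(z\) must have polynomial bit length. Polynomial-time HNF gives this, provided the valuation entries are polynomially bounded. Those entries are in turn bounded by the compact-representation sizes from \Cref{thm:s-unit-black-box}, via the norm argument used in \Cref{lem:compact-reduction}. Second, the overall running time must be polynomial in \([F:\Q]\), \(\log|\Delta_F|\), and the bit size of \(\mathfrak a\).

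If the self-contained reduction becomes delicate, for instance in splitting primes in fields with a non-monogenic \(\OO_F\), I would simply defer to \cite[Theorem~1]{BiasseSong2025} and its PIP corollary. That is the logical status of this statement in the paper anyway.
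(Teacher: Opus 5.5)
Your primary route of citing \cite[Corollary~1 and Algorithm~6]{BiasseSong2025} is exactly how the paper justifies this statement. Your optional reduction to \Cref{thm:s-unit-black-box} is a sound sketch of the PIP-to-\(S\)-unit reduction used in that source: factor \(\mathfrak a\) via Shor's algorithm and prime splitting, take \(S\) to be its support, and solve the integer valuation system \(Vz=m\).

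One correction to that sketch: valuations at arbitrary prime ideals of a general \(\OO_F\) should be computed by a standard method, e.g.\ repeated multiplication by an anti-uniformizer at \(\mathfrak q\). The Hensel-lift evaluation from \Cref{lem:compact-reduction} does not carry over, because it relies on \(\OO_L/\mathfrak p^h\cong\Z/p^h\Z\) for an unramified degree-one prime of the cyclotomic ring.
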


This is \cite[Corollary~1 and Algorithm~6]{BiasseSong2025}. In our application,
the field is \(L\), the ring of integers is \(\OO_L\), and the input ideal is
supplied through the Stickelberger factorization
\[
    \bigl(J_a(\chi)\bigr)
    =
    \prod_t\mathfrak p_t^{\left\lfloor at/q\right\rfloor}.
\]
If the implementation requires an HNF basis, this factored description is
converted by repeated ideal multiplication and polynomial-time HNF
reduction. The norm bound in Step~1 below shows that the resulting basis has
polynomial bit length.

\subsubsection{Gauss-Sum Phase Estimation}

\begin{theorem}[Van Dam--Seroussi]
\label{thm:gauss-phase-black-box}
Let \(\psi\) be a nontrivial multiplicative character of
\(\F_p^\times\), supplied by a succinct description that allows
\(\psi(x)\) to be evaluated coherently and efficiently. Given
\[
    0<\varepsilon<1
    \qquad\text{and}\qquad
    0<\delta<\frac12,
\]
a bounded-error quantum algorithm returns an angle
\(\widetilde\theta\) such that, with probability at least \(1-\delta\),
\[
    \min_{k\in\Z}
    \left|
        \widetilde\theta-\arg G(\psi)+2\pi k
    \right|
    \le \varepsilon.
\]
Its running time is
\[
    O\!\left(
        \varepsilon^{-1}
        \polylog(p)
        \log(1/\delta)
    \right).
\]
\end{theorem}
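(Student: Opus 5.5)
The plan is standard quantum phase estimation. The key step is to realize \(G(\psi)/\sqrt p\) as an eigenvalue of an efficiently implementable unitary, with an eigenvector that can be prepared coherently. Work in the Hilbert space spanned by \(\{|x\rangle : x\in\F_p\}\). Let \(F_p\) denote the quantum Fourier transform over \(\Z/p\Z\),
\[
    F_p|x\rangle=\frac{1}{\sqrt p}\sum_{y\in\F_p}\exp\left(\frac{2\pi i xy}{p}\right)|y\rangle .
\]
Let \(V\) be the permutation unitary sending \(|y\rangle\mapsto|y^{-1}\rangle\) for \(y\neq0\) and \(|0\rangle\mapsto|0\rangle\). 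Set \(U=V F_p\) and
\[
    |\psi\rangle=\frac{1}{\sqrt{p-1}}\sum_{x\in\F_p^\times}\psi(x)|x\rangle .
\]

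First I would verify the eigen-relation. For \(y\neq0\), substituting \(x\mapsto xy^{-1}\) gives
\[
    \sum_x\psi(x)\exp\left(\frac{2\pi i xy}{p}\right)=\overline{\psi}(y)\,G(\psi).
\]
For \(y=0\) the sum vanishes because \(\psi\) is nontrivial. Hence
\[
    F_p|\psi\rangle=\frac{G(\psi)}{\sqrt p}\,\frac{1}{\sqrt{p-1}}\sum_{y\neq0}\overline\psi(y)|y\rangle .
\]
Since \(\overline\psi(y)=\psi(y^{-1})\), applying \(V\) returns exactly \(|\psi\rangle\). So
\[
    U|\psi\rangle=\frac{G(\psi)}{\sqrt p}\,|\psi\rangle,
\]
and this eigenvalue has modulus one, because \(|G(\psi)|=\sqrt p\) (a standard fact that also follows from the computation itself, since \(U\) is unitary). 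Its argument is exactly \(\arg G(\psi)\).

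Second, I would check efficiency.
\begin{enumerate}
\item \(F_p\) can be implemented to precision \(\varepsilon'\) in \(\polylog(p)\log(1/\varepsilon')\) gates using standard approximate QFT constructions over arbitrary moduli.
\item \(V\) is a reversible modular inversion, computed by the extended Euclidean algorithm and uncomputed in the usual way.
\item \(|\psi\rangle\) is prepared by creating a uniform superposition over \(\F_p^\times\) and computing \(\psi(x)\) into an ancilla using the assumed coherent evaluation. For the characters used in this paper this means computing a discrete logarithm of \(x\), say by Shor's algorithm, reducing it modulo the character order, and writing the corresponding root of unity as a phase via a phase kickback. Finally the ancilla garbage is uncomputed.
\end{enumerate}
Phase estimation with \(t=\lceil\log_2(1/\varepsilon)\rceil+O(1)\) ancilla qubits uses \(O(1/\varepsilon)\) controlled applications of \(U\) in total. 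This gives an estimate within \(\varepsilon\) with constant success probability. The median of \(O(\log(1/\delta))\) independent runs, taken on the circle, boosts the success probability to \(1-\delta\). The total cost is \(O(\varepsilon^{-1}\polylog(p)\log(1/\delta))\).

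The main obstacle I expect is controlling error accumulation. Both the approximate QFT and the state preparation (especially the coherent character evaluation) introduce errors that compound over \(1/\varepsilon\) applications of \(U\). I would first choose the per-gate precision \(\varepsilon'=\varepsilon/\poly(\log p)\), so that the total operator-norm error over all controlled powers stays a small constant. This costs only an extra \(\log(1/\varepsilon)\) factor inside the \(\polylog\). Then I would check that an imperfect eigenstate only lowers the success probability of each run by a constant, which the median amplification absorbs.
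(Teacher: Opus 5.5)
Your proof is correct, and it goes further than the paper. The paper does not prove this statement: it cites Van Dam and Seroussi for a constant-success-probability phase estimate and remarks that the \(\log(1/\delta)\) dependence follows by repetition and amplification. You instead give a self-contained construction built on the identity \(F_p|\psi\rangle=(G(\psi)/\sqrt p)\,|\overline\psi\rangle\). The inversion permutation \(V\) closes this up, so that \(|\psi\rangle\) is an exact eigenvector of \(U=VF_p\), and standard phase estimation applies. Your amplification step is then the same as the paper's.

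Your route makes three things explicit that the citation hides:
\begin{itemize}
\item It explains why the cost scales as \(\varepsilon^{-1}\) rather than \(\varepsilon^{-2}\): controlled powers of \(U\) accumulate phase precisely because \(|\psi\rangle\) is an eigenvector.
\item It shows that coherent evaluation of \(\psi\) is needed only once per run, for state preparation, and never inside \(U\).
\item It yields \(|G(\psi)|=\sqrt p\) for free from unitarity; the paper imports this fact separately in Appendix B.
\end{itemize}

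Two small tidy-ups are needed.

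First, the gate-precision overhead is a factor \(\polylog(1/\varepsilon)\). This is absorbed into \(\polylog(p)\) only when \(\log(1/\varepsilon)\le\polylog(p)\). That is harmless in the paper's application, where \(\varepsilon\) is only polynomially small in \(q\), and the overhead disappears in an exact-gate model. For arbitrary \(\varepsilon\), however, the stated bound needs one of these qualifications, so your phrase ``inside the polylog'' is not literally right.

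Second, the ``median on the circle'' is best made precise as follows. Run each trial to precision \(\varepsilon/3\), and output any estimate that has more than half of all estimates (itself included) within \(2\varepsilon/3\) of it. A good estimate always qualifies. Any qualifying estimate shares a neighbour with the majority of good estimates, so a pigeonhole argument places it within \(\varepsilon\) of \(\arg G(\psi)\).
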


The constant-success-probability phase-estimation algorithm is due to
Van Dam and Seroussi \cite{VanDamSeroussi}; the dependence on \(\delta\)
follows by repetition and amplification. In the main algorithm, we apply it
to
\[
    \chi
    \qquad\text{and}\qquad
    \chi^a.
\]

\section{The Prime-Modulus Algorithm}\label{sec:prime-modulus}

The goal is not merely to approximate \(J_a(\chi)\) as a complex number,
but to recover it in a compact algebraic form that can be reduced modulo the
prime ideal \(\mathfrak p\). We use the notation and algorithmic interfaces
introduced in \Cref{sec:preliminaries}.

\subsection{Algorithm Overview}

\paragraph{Input.}
A prime \(p\), a divisor \(q\mid p-1\), an integer \(1\le a<q\), and the
compatible pair \((\chi,\mathfrak p)\) defined in
\Cref{sec:preliminaries}.

\paragraph{Output.}
An exact compact representation of \(J_a(\chi)\), together with its
reductions modulo \(\mathfrak p\) and \(\mathfrak p^2\).

\begin{enumerate}

\item \textsc{[Classical]} \textbf{Compute the Stickelberger factorization.}

Using \Cref{prop:stickelberger-factorization}, compute
\[
    \bigl(J_a(\chi)\bigr)
    =
    \prod_{t\in(\Z/q\Z)^\times}
    \mathfrak p_t^{e_t},
    \qquad
    e_t=\left\lfloor\frac{at}{q}\right\rfloor.
\]

\item \textsc{[Quantum]} \textbf{Compute a generator of the principal ideal.}

Apply \Cref{thm:pip-black-box} to
\[
    \left(
        \OO_L,\,
        \prod_t\mathfrak p_t^{e_t}
    \right),
\]
with the factorization
\[
    \{(\mathfrak p_t,e_t)\}_t
\]
supplied explicitly. Obtain a compactly represented element \(\gamma\)
satisfying
\[
    (\gamma)=\bigl(J_a(\chi)\bigr).
\]
Thus,
\[
    \gamma=uJ_a(\chi)
\]
for some unknown unit \(u\in\OO_L^\times\).

\item \textsc{[Quantum]} \textbf{Compute fundamental-unit bases.}

Apply \Cref{thm:s-unit-black-box} with \(S=\varnothing\) to
\[
    (L,\OO_L,\varnothing)
    \qquad\text{and}\qquad
    (L^+,\OO_{L^+},\varnothing).
\]
Obtain compact fundamental-unit bases
\[
    \OO_L^\times
    =
    \mu(L)\times
    \langle\epsilon_1,\ldots,\epsilon_r\rangle
\]
and
\[
    \OO_{L^+}^\times
    =
    \{\pm1\}\times
    \langle e_1,\ldots,e_r\rangle.
\]

\item \textsc{[Classical]} \textbf{Solve the relative norm equation.}

Set
\[
    v=\frac{p^{a-1}}{\gamma\overline\gamma}
    \in\OO_{L^+}^\times.
\]
Using logarithmic embeddings, recover the free-unit coordinates
\[
    v=\nu_0\prod_{i=1}^r e_i^{b_i},
    \qquad
    \epsilon_j\overline{\epsilon_j}
    =
    \nu_j\prod_{i=1}^r e_i^{A_{ij}},
\]
where \(\nu_0,\nu_j\in\{\pm1\}\). Let
\[
    b=(b_1,\ldots,b_r)^T,
\]
and let \(A\in\Z^{r\times r}\) be the matrix whose \(j\)-th column is the
exponent vector of \(\epsilon_j\overline{\epsilon_j}\) in the basis
\(e_1,\ldots,e_r\).

Solve
\[
    Ax=b
\]
over the integers. For a solution \(x=(x_1,\ldots,x_r)^T\), set
\[
    w=\prod_{j=1}^r\epsilon_j^{x_j},
    \qquad
    \alpha=w\gamma.
\]
Then
\[
    w\overline w=v,
    \qquad
    \alpha=\xi J_a(\chi)
\]
for some \(\xi\in\mu(L)\).

\item \textsc{[Quantum]} \textbf{Identify the missing phase.}

By \Cref{prop:gauss-jacobi},
\[
    J_a(\chi)=\frac{G(\chi)^a}{G(\chi^a)}.
\]
Apply \Cref{thm:gauss-phase-black-box} to \(\chi\) and \(\chi^a\), and
compute phase estimates
\[
    \theta_1\approx\arg G(\chi),
    \qquad
    \theta_a\approx\arg G(\chi^a).
\]
Set
\[
    \theta_J
    =
    a\theta_1-\theta_a
    \pmod{2\pi},
\]
compute
\[
    \theta_\alpha\approx\arg\alpha,
\]
and then set
\[
    \theta_\xi
    =
    \theta_\alpha-\theta_J
    \pmod{2\pi}.
\]
Enumerate the roots of unity in \(\mu(L)\), and select the unique
\(\xi\in\mu(L)\) whose argument is closest to \(\theta_\xi\). This gives
\[
    J_a(\chi)=\xi^{-1}\alpha.
\]

\end{enumerate}

The identity
\[
    J_a(\chi)=\xi^{-1}\alpha
\]
is the desired exact compact representation. The Stickelberger factorization
shows that the exponent of \(\mathfrak p\) in
\(\bigl(J_a(\chi)\bigr)\) is
\[
    \left\lfloor\frac aq\right\rfloor=0.
\]
For \(h\in\{1,2\}\), apply \Cref{lem:compact-reduction} to obtain
\[
    \rho_{\mathfrak p^h}\!\left(J_a(\chi)\right)
    =
    \rho_{\mathfrak p^h}\!\left(\xi^{-1}\alpha\right).
\]

\section{Correctness}\label{sec:correctness}

\begin{theorem}\label{thm:jacobi-algorithm-correctness}
With bounded error, the algorithm above outputs an exact compact
representation of \(J_a(\chi)\) and its reductions modulo
\(\mathfrak p\) and \(\mathfrak p^2\).
\end{theorem}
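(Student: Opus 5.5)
I will verify the five steps in order, showing that each produces exactly the object claimed, and then take a union bound over the bounded-error quantum subroutines. \textbf{Step 1} is immediate from \Cref{prop:stickelberger-factorization}. For \textbf{Step 2}, the ideal \(\prod_t\mathfrak p_t^{e_t}\) is principal because it equals \((J_a(\chi))\). So \Cref{thm:pip-black-box} returns \(\gamma\) with \((\gamma)=(J_a(\chi))\), and therefore \(u=\gamma/J_a(\chi)\in\OO_L^\times\). For the complexity claim, note that \(e_t<q\), so the norm of the ideal is at most \(p^{a-1}\) by \Cref{prop:gauss-jacobi}. Also \(\log|\Delta_L|\le\varphi(q)\log q\). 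Hence all black-box inputs have size \(q^{O(1)}\polylog p\).

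\textbf{Step 3} is a direct citation of \Cref{thm:s-unit-black-box}. \textbf{Step 4} is the algebraic core. By \Cref{prop:gauss-jacobi},
\[
\gamma\overline\gamma=u\overline u\,J_a(\chi)\overline{J_a(\chi)}=u\overline u\,p^{a-1},
\]
so \(v=(u\overline u)^{-1}\in\OO_{L^+}^\times\). Moreover \(v\) is a relative norm, namely \(v=\Norm_{L/L^+}(u^{-1})\). Write \(u^{-1}=\zeta\prod_j\epsilon_j^{y_j}\) with \(\zeta\in\mu(L)\). Then \(\zeta\overline\zeta=1\) gives \(v=\prod_j(\epsilon_j\overline{\epsilon_j})^{y_j}\) up to sign, so \(Ay\equiv b\) on the free parts. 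This shows \(Ax=b\) is solvable over \(\Z\), and it can be solved by Hermite/Smith normal form.

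For any integer solution \(x\), the free part of \(w\overline w\) agrees with that of \(v\). Both are totally positive (\(w\overline w\) is a norm from a CM field, and \(v=(u\overline u)^{-1}\)), so the signs \(\pm1\) also agree and \(w\overline w=v\). Then \(\alpha=w\gamma\) satisfies
\[
\alpha\overline\alpha=v\,\gamma\overline\gamma=p^{a-1}=J_a(\chi)\overline{J_a(\chi)}.
\]
Hence the unit \(\xi=\alpha/J_a(\chi)\) has \(\xi\overline\xi=1\), and \(\sigma_t(\xi)\overline{\sigma_t(\xi)}=1\) for every \(t\) because complex conjugation commutes with Galois in the abelian field \(L\). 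So every conjugate of \(\xi\) has absolute value \(1\), and by Kronecker's theorem \(\xi\in\mu(L)\).

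I expect the main obstacle to be the precision bookkeeping. The logarithmic-embedding coordinates \(b\) and \(A\) must be computed exactly from compact representations, by rounding approximate logarithms to integers using a polynomial lower bound on the regulator or on the minimum of the log lattice. All entries must also stay of polynomial size, so I will use polynomial-size bounds on the compact representations from \cite{BiasseSong2025}.

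For \textbf{Step 5}, \(\mu(L)\) is \(\mu_q\) or \(\mu_{2q}\), so distinct candidates for \(\xi\) have arguments at least \(\pi/q\) apart. It therefore suffices that \(\theta_\xi\) be accurate to within \(\pi/(2q)\). I will run \Cref{thm:gauss-phase-black-box} with \(\varepsilon=\pi/(8aq)\le\pi/(8q^2)\), so that \(a\theta_1-\theta_a\) is within \(\pi/(4q)\) of \(\arg J_a(\chi)\). I will compute \(\theta_\alpha\) to the same accuracy by summing the arguments of the embeddings of the compact factors, weighted by their exponents, with polynomially many bits of precision. This costs \(\softO(q^2)\) time. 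Choosing \(\delta\) constant and small ensures that, by a union bound with Steps 2 and 3, all subroutines succeed with probability at least \(2/3\).

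On success, \(J_a(\chi)=\xi^{-1}\alpha=\xi^{-1}\prod_j\epsilon_j^{x_j}\gamma\) is an exact compact representation. Since \(v_{\mathfrak p}(J_a(\chi))=\lfloor a/q\rfloor=0\), \Cref{lem:compact-reduction} applies for \(h\in\{1,2\}\) and yields the reductions modulo \(\mathfrak p\) and \(\mathfrak p^2\).
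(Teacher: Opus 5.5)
Your proposal is correct and follows the paper's proof essentially step for step. PIP gives $\gamma=uJ_a(\chi)$, the system $Ax=b$ is consistent because $u^{-1}$ solves it, total positivity forces $w\overline w=v$, Kronecker's theorem makes $\alpha/J_a(\chi)$ a root of unity, the $\pi/q$ separation in $\mu_{2q}$ pins down $\xi$, and \Cref{lem:compact-reduction} applies since $v_{\mathfrak p}(J_a(\chi))=0$. There are two harmless slips in the complexity asides: the absolute norm of the ideal is $p^{(a-1)\varphi(q)/2}$, not at most $p^{a-1}$; and since $a<q$ one has $\pi/(8aq)\ge\pi/(8q^2)$, which is the direction you actually need to bound $\varepsilon^{-1}$ by $O(q^2)$.
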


\begin{proof}
We verify the five steps in order.

\paragraph{Step 1: Stickelberger factorization.}

By \Cref{prop:stickelberger-factorization},
\[
    \bigl(J_a(\chi)\bigr)
    =
    \prod_{t\in(\Z/q\Z)^\times}
    \mathfrak p_t^{\left\lfloor at/q\right\rfloor}.
\]
Thus Step~1 constructs the prime-ideal factorization of the principal ideal
generated by \(J_a(\chi)\).

\paragraph{Step 2: Principal-ideal recovery.}

The input ideal is principal by Step~1. Therefore,
\Cref{thm:pip-black-box} returns \(\gamma\) satisfying
\[
    (\gamma)=\bigl(J_a(\chi)\bigr).
\]
The quotient
\[
    u=\frac{\gamma}{J_a(\chi)}
\]
is taken in \(L\). Since \(\gamma\) and \(J_a(\chi)\) generate the same
principal ideal, both \(u\) and \(u^{-1}\) belong to \(\OO_L\). Hence
\[
    u\in\OO_L^\times
\]
and
\[
    \gamma=uJ_a(\chi).
\]

\paragraph{Step 3: Fundamental-unit bases.}

By \Cref{thm:s-unit-black-box}, the two calls with \(S=\varnothing\)
return fundamental-unit bases for
\[
    \OO_L^\times
    \qquad\text{and}\qquad
    \OO_{L^+}^\times.
\]
These unit groups have the same free rank, so the relative norm map is
represented in the chosen bases by the square matrix \(A\).

\paragraph{Step 4: The relative norm equation.}

By \Cref{prop:gauss-jacobi} and the identity
\(\gamma=uJ_a(\chi)\),
\[
    v
    =
    \frac{p^{a-1}}{\gamma\overline\gamma}
    =
    \frac{1}{u\overline u}.
\]
Thus \(v\) is a totally positive unit of \(\OO_{L^+}\), and \(u^{-1}\)
is a solution of
\[
    z\overline z=v.
\]

Write
\[
    u^{-1}
    =
    \xi_0
    \prod_{j=1}^r
    \epsilon_j^{x_j^{(0)}}
\]
for some \(\xi_0\in\mu(L)\). Taking relative norms and expressing the result
in the basis \(e_1,\ldots,e_r\) gives
\[
    Ax^{(0)}=b.
\]
Therefore, the integer system \(Ax=b\) is consistent.

Let \(x\) be any integer solution and set
\[
    w=\prod_{j=1}^r\epsilon_j^{x_j}.
\]
The equality \(Ax=b\) implies that \(w\overline w\) and \(v\) have the same
free-unit coordinates. Hence
\[
    \frac{w\overline w}{v}\in\{\pm1\}.
\]
Both \(w\overline w\) and \(v\) are totally positive, so this quotient is
\(1\). Therefore,
\[
    w\overline w=v.
\]

Now set
\[
    \alpha=w\gamma.
\]
Since \(w\) is a unit,
\[
    (\alpha)=\bigl(J_a(\chi)\bigr),
\]
and
\[
    \alpha\overline\alpha
    =
    w\overline w\,
    \gamma\overline\gamma
    =
    p^{a-1}.
\]
Let
\[
    \delta=\frac{\alpha}{J_a(\chi)}.
\]
The equality of principal ideals gives
\[
    \delta\in\OO_L^\times,
\]
while
\[
    \delta\overline\delta
    =
    \frac{\alpha\overline\alpha}
         {J_a(\chi)\overline{J_a(\chi)}}
    =
    1.
\]
For every complex embedding
\(\sigma\colon L\hookrightarrow\mathbb C\),
\[
    |\sigma(\delta)|^2
    =
    \sigma(\delta\overline\delta)
    =
    1.
\]
By Kronecker's theorem, \(\delta\) is a root of unity. Thus
\[
    \alpha=\xi J_a(\chi)
\]
for some \(\xi\in\mu(L)\).

\paragraph{Step 5: Identification of the missing phase.}

For \(a=1\), one has \(J_1(\chi)=1\), so suppose \(2\le a<q\). By
\Cref{prop:gauss-jacobi},
\[
    \arg J_a(\chi)
    \equiv
    a\,\arg G(\chi)-\arg G(\chi^a)
    \pmod{2\pi}.
\]
Choose the Gauss-sum estimates so that the resulting approximation to
\(\arg J_a(\chi)\) has circular error less than
\[
    \frac{\pi}{8q},
\]
and evaluate \(\arg\alpha\) to circular error less than the same quantity.
Since
\[
    \alpha=\xi J_a(\chi),
\]
the difference of the two estimates approximates \(\arg\xi\) to error less
than
\[
    \frac{\pi}{4q}.
\]

Every root of unity in \(L\) belongs to \(\mu_{2q}\). Distinct candidates
are therefore separated by an angle of at least
\[
    \frac{\pi}{q}.
\]
The unique closest root of unity is consequently the correct \(\xi\), and
\[
    J_a(\chi)=\xi^{-1}\alpha.
\]

\paragraph{Reduction modulo \(\mathfrak p^h\).}

The exponent of the distinguished prime
\(\mathfrak p=\mathfrak p_1\) in the Stickelberger factorization is
\[
    \left\lfloor\frac{a}{q}\right\rfloor=0.
\]
Hence \(\mathfrak p\) does not divide \(\bigl(J_a(\chi)\bigr)\). For each
\(h\in\{1,2\}\), the powers of \(\mathfrak p\) contributed by the
individual factors in the compact representation therefore cancel, and
\Cref{lem:compact-reduction} gives
\[
    \rho_{\mathfrak p^h}\!\left(\xi^{-1}\alpha\right)
    =
    \rho_{\mathfrak p^h}\!\left(J_a(\chi)\right).
\]

The quantum subroutines in Steps~2, 3, and~5 have bounded error. Amplifying
each to a sufficiently small constant failure probability and applying a
union bound gives an overall success probability bounded away from
\(1/2\).
\end{proof}

\subsection{Time and Space Complexity}

Set
\[
    d=\varphi(q),
    \qquad
    \mathcal{B}=\left\lceil\log_2p\right\rceil,
\]
so that \(\mathcal{B}\) is the bit length of \(p\). We count classical bit
operations, quantum gates, classical space, and quantum space.
Throughout this subsection, \(\softO\) suppresses factors polynomial in
\(\log p\) and \(\log q\), together with the cost of constant-error
amplification.

\paragraph{Input descriptions.}

The field \(L\) has degree
\[
    [L:\Q]=d\le q,
\]
and the field \(L^+\) has degree \(d/2\). Their computational
representations from \Cref{sec:preliminaries} have bit length
\(q^{O(1)}\). Moreover,
\[
    \log|\Delta_L|=O(q\log q),
    \qquad
    \log|\Delta_{L^+}|=O(q\log q).
\]

The character \(\chi\) is represented by \(p,q\), a primitive root
\(g\in\F_p^\times\), and the rule
\[
    \chi(g^m)=\zeta_q^m.
\]
Shor's algorithm factors \(p-1\) and computes discrete logarithms in
\(\F_p^\times\) in quantum time polynomial in \(\mathcal{B}\)
\cite{Shor}. Once the prime divisors of \(p-1\) are known, a primitive root
can be found by sampling and testing candidates. The discrete-logarithm
circuit and the subsequent arithmetic can be implemented reversibly, giving
the coherent character-evaluation oracles required by
\Cref{thm:gauss-phase-black-box}. Constructing \(g\), evaluating \(\chi\)
and \(\chi^a\), and computing
\[
    \eta=g^{-K},
    \qquad
    \mathfrak p=(p,\zeta_q-\eta)
\]
therefore require only \(\poly(\mathcal{B})\) quantum gates and classical bit
operations.

\paragraph{Step 1: Stickelberger factorization.}

There are
\[
    d=\varphi(q)\le q
\]
values of \(t\in(\Z/q\Z)^\times\). For each \(t\), the algorithm computes
\[
    \eta^t\bmod p,
    \qquad
    e_t=\left\lfloor\frac{at}{q}\right\rfloor.
\]
The complete list is constructed in
\[
    \softO(q\mathcal{B})
\]
classical bit operations.

Each pair \((\mathfrak p_t,e_t)\) requires
\[
    O(\mathcal{B}+\log q)
\]
bits. Thus the factored ideal
\[
    \bigl(J_a(\chi)\bigr)
    =
    \prod_t\mathfrak p_t^{e_t}
\]
is represented using
\[
    O\bigl(q(\mathcal{B}+\log q)\bigr)
\]
bits.

Every \(\mathfrak p_t\) has norm \(p\). Pairing \(t\) with \(q-t\) gives
\[
    \sum_t e_t
    =
    \frac{(a-1)\varphi(q)}{2}.
\]
Consequently,
\[
    \log\Norm\!\left(\bigl(J_a(\chi)\bigr)\right)
    =
    \frac{(a-1)\varphi(q)}{2}\log p
    =
    O(q^2\mathcal{B}).
\]

\paragraph{Step 2: Principal-ideal recovery.}

Apply the complexity guarantee in
\Cref{thm:pip-black-box}. In our setting,
\[
    [L:\Q]\le q,
    \qquad
    \log|\Delta_L|=O(q\log q),
\]
and the input ideal has bit length polynomial in \(q\) and \(\mathcal{B}\).
Therefore, \(\gamma\) is computed in
\[
    q^{O(1)}\mathcal{B}^{O(1)}
\]
quantum time, using polynomial classical and quantum workspace. Its compact
representation has polynomial bit length.

\paragraph{Step 3: Fundamental-unit bases.}

Apply the complexity guarantee in
\Cref{thm:s-unit-black-box} with \(S=\varnothing\). The two fields have
degrees \(d\) and \(d/2\), and both logarithmic discriminants are
\(O(q\log q)\). Hence the two fundamental-unit bases are computed in
\[
    q^{O(1)}
\]
quantum time and polynomial classical and quantum workspace.

Each basis contains
\[
    r=\frac{\varphi(q)}{2}-1=O(q)
\]
fundamental units, each returned in compact representation.

\paragraph{Step 4: The relative norm equation.}

The compact representations of \(\gamma\), the \(\epsilon_j\), and the
\(e_i\) can be evaluated in every archimedean embedding to polynomial
precision in polynomial time. Let
\[
    \sigma_1,\ldots,\sigma_{r+1}\colon
    L^+\longrightarrow\R
\]
be the real embeddings of \(L^+\). For a unit
\(y\in\OO_{L^+}^\times\), define its logarithmic embedding by
\[
    \ell(y)
    :=
    \left(
        \log|\sigma_1(y)|,\ldots,
        \log|\sigma_{r+1}(y)|
    \right)^T.
\]
Since
\[
    \left|\Norm_{L^+/\Q}(y)\right|=1,
\]
the coordinates of \(\ell(y)\) sum to zero. We may therefore delete one coordinate and regard
\[
    \ell(y)\in\R^r.
\]

When \(r=0\), there are no free-unit coordinates to recover, and the
coordinate-recovery computation is vacuous. For the remainder of this
argument, assume \(r\ge1\).

Let \(E\in\R^{r\times r}\) be the matrix whose \(i\)-th column is
\(\ell(e_i)\). If
\[
    y=\nu\prod_{i=1}^r e_i^{c_i},
    \qquad
    \nu\in\{\pm1\},
\]
and
\[
    c=(c_1,\ldots,c_r)^T,
\]
then
\[
\begin{aligned}
    \ell(y)
    &=
    \sum_{i=1}^r c_i\ell(e_i) = Ec.
\end{aligned}
\]
Thus the exponent vector of \(y\) is
\[
    c=E^{-1}\ell(y).
\]

The columns of \(E\) generate the logarithmic unit lattice after one
coordinate has been deleted. Before this deletion, the first minimum of the
lattice is at least
\[
    \frac{\log(d/2)}{6(d/2)^4}
\]
by \cite[Proposition~6]{BiasseSong2025}. Since the deleted coordinate is the
negative of the sum of the remaining coordinates, deleting it decreases
Euclidean lengths by at most a factor of \(\sqrt{r+1}\). Hence every
nonzero vector in \(E\Z^r\) has length at least
\[
    \frac{\log(d/2)}
         {6\sqrt{r+1}(d/2)^4}
    =
    q^{-O(1)}.
\]
The balls of half this radius centered at the lattice points are disjoint,
so a standard packing argument gives
\[
    |\det E|
    \ge
    2^{-q^{O(1)}}.
\]

The exact compact representations also bound the size of the entries of
\(E\) and \(\ell(y)\). Indeed, after clearing denominators in each compact
factor, its coefficient size bounds all of its conjugates from above, while
its nonzero norm bounds them from below. Combining these bounds with the
exponents in the compact representation gives
\[
    \max\bigl\{
        \|E\|_2,\,
        \|\ell(y)\|_2
    \bigr\}
    \le
    2^{q^{O(1)}\mathcal B^{O(1)}}
\]
for every unit \(y\) considered here. Cramer's rule, Hadamard's inequality,
and the lower bound on \(|\det E|\) therefore give
\[
    \|E^{-1}\|_2
    \le
    2^{q^{O(1)}\mathcal B^{O(1)}}.
\]
It follows from
\[
    c=E^{-1}\ell(y)
\]
that the coordinates of \(c\) have polynomial bit length.

It remains to justify their exact recovery from numerical approximations.
The compact representations allow the entries of \(E\) and \(\ell(y)\) to
be computed with absolute error at most \(2^{-T}\) in time polynomial in
\(q\), \(\mathcal B\), and \(T\). Let \(\widetilde E\) and
\(\widetilde\ell\) be the resulting approximations, and set
\[
    \widetilde c
    =
    \widetilde E^{-1}\widetilde\ell.
\]
For polynomially large \(T\), the preceding bounds ensure that
\[
    \|E^{-1}\|_2
    \|\widetilde E-E\|_2
    \le
    \frac12,
\]
and therefore
\[
    \|\widetilde E^{-1}\|_2
    \le
    2\|E^{-1}\|_2.
\]
Moreover,
\[
    \widetilde c-c
    =
    \widetilde E^{-1}
    \left(
        \widetilde\ell-\ell(y)
        -
        (\widetilde E-E)c
    \right).
\]
Polynomially many bits of precision consequently make
\[
    \|\widetilde c-c\|_2<\frac12.
\]
Rounding each coordinate of \(\widetilde c\) therefore recovers \(c\)
exactly.

Applying this procedure to
\[
    y=v
    \qquad\text{and}\qquad
    y=\epsilon_j\overline{\epsilon_j},
    \quad 1\le j\le r,
\]
computes the vector \(b\) and every column of \(A\). Their entries have
polynomial bit length, and the complete coordinate-recovery computation has
\[
    q^{O(1)}\mathcal B^{O(1)}
\]
bit complexity. The system
\[
    Ax=b
\]
can therefore be solved by Smith normal form in
\[
    q^{O(1)}\mathcal{B}^{O(1)}
\]
bit operations and workspace \cite{KannanBachem1979}.

The resulting exponent vector has polynomial bit length, so
\[
    w=\prod_{j=1}^r\epsilon_j^{x_j}
\]
is retained as a compact power product. Forming
\[
    \alpha=w\gamma
\]
has the same polynomial bit complexity.

\paragraph{Step 5: Identification of the missing phase.}

Distinct elements of \(\mu(L)\subseteq\mu_{2q}\) are separated by an angle
of at least \(\pi/q\). It is enough to estimate
\[
    \arg G(\chi)
\]
to error \(O((aq)^{-1})\), estimate
\[
    \arg G(\chi^a)
\]
to error \(O(q^{-1})\), and evaluate
\[
    \arg\alpha
\]
to error \(O(q^{-1})\).

By \Cref{thm:gauss-phase-black-box}, the two Gauss-sum estimates require
\[
    \softO(aq+q)
\]
quantum gates. Since \(a<q\), this is at most
\[
    \softO(q^2).
\]
Evaluating the phase of \(\alpha\) and enumerating the at most \(2q\) roots
of unity require
\[
    q^{O(1)}\mathcal{B}^{O(1)}
\]
classical time and workspace.

\paragraph{Reduction modulo \(\mathfrak p^h\).}

The final compact representation of \(J_a(\chi)\) has size polynomial in
\(q\) and \(\mathcal B\). In the Stickelberger factorization, the exponent of
the distinguished prime \(\mathfrak p=\mathfrak p_1\) is
\[
    \left\lfloor\frac aq\right\rfloor=0.
\]
Thus \(\mathfrak p\) does not divide \(\bigl(J_a(\chi)\bigr)\), and for each
\(h\in\{1,2\}\), \Cref{lem:compact-reduction} computes
\[
    \rho_{\mathfrak p^h}\!\left(J_a(\chi)\right)
    \in
    \Z/p^h\Z
\]
in
\[
    q^{O(1)}\mathcal B^{O(1)}
\]
classical bit operations and workspace.

\subsection{Proof of the Jacobi-Sum Computation Theorem}

\begin{proof}[Proof of \Cref{thm:jacobi-computation}]
By \Cref{thm:jacobi-algorithm-correctness}, the algorithm outputs an exact
compact representation of \(J_a(\chi)\), together with its reductions modulo
\(\mathfrak p\) and \(\mathfrak p^2\), with bounded error.

Step~1, Step~4, and the reductions modulo \(\mathfrak p^h\) have
\[
    q^{O(1)}\mathcal{B}^{O(1)}
\]
classical bit complexity. Steps~2 and~3 have
\[
    q^{O(1)}\mathcal{B}^{O(1)}
\]
quantum gate complexity by
\Cref{thm:pip-black-box} and
\Cref{thm:s-unit-black-box}. Step~5 costs
\[
    \softO(aq+q),
\]
which is \(\softO(q^2)\) uniformly for \(a<q\).

Choosing \(c\) larger than the fixed degrees of these polynomial bounds
gives the expected running time
\[
    \softO(q^c).
\]
The exact compact representation of \(J_a(\chi)\), as well as all
intermediate compact representations and matrices, has
\(q^{O(1)}\polylog(p)\) bit length. The same form of bound therefore applies
to classical and quantum workspace. Constant-error amplification changes
the complexity only by factors hidden in \(\softO\).
\end{proof}

\subsection{Proof of the Main Theorem}

We now combine the Jacobi-sum procedure with two classical interval
products.

\begin{proof}[Proof of \Cref{thm:main}]
First suppose that \(0<n<p-1\), and write
\[
    n=aK+r,
    \qquad
    0\le a<q,
    \qquad
    0\le r<K.
\]

If \(a=0\), then \(n=r<K\), and
\Cref{lem:interval-products} computes \(n!\bmod p\) directly in
\[
    \softO(\sqrt K)
    =
    \softO\left(\sqrt{\frac{p}{q}}\right)
\]
time.

Now suppose that \(1\le a<q\). Combining \eqref{eq:central} with the identity
\(n!=(aK)!\prod_{j=1}^{r}(aK+j)\) gives
\[
    n!
    \equiv
    (-1)^{a-1}
    \rho_{\mathfrak p}\!\left(J_a(\chi)\right)
    (K!)^a
    \prod_{j=1}^{r}(aK+j)
    \pmod p.
\]

The two interval products
\[
    K!
    \qquad\text{and}\qquad
    \prod_{j=1}^{r}(aK+j)
\]
have lengths at most \(K\). By
\Cref{lem:interval-products}, they are computed modulo \(p\) in
\[
    \softO(\sqrt K)
    =
    \softO\left(\sqrt{\frac{p}{q}}\right)
\]
time. Raising \(K!\) to the power \(a\) requires only
\(\polylog(p)\) additional bit operations.

By \Cref{thm:jacobi-computation},
\[
    \rho_{\mathfrak p}\!\left(J_a(\chi)\right)
\]
is computed with bounded error in expected time
\[
    \softO(q^c).
\]

The endpoint \(n=p-1\) is handled by writing
\[
    p-1=(q-1)K+K.
\]
The central congruence gives
\[
    ((q-1)K)!\pmod p,
\]
after which one final interval product of length \(K\) remains.

Combining the Jacobi-sum computation with the interval products gives total
expected running time
\[
    \softO\left(
        q^c+\sqrt{\frac{p}{q}}
    \right).
\]
The classical steps are deterministic, and the failure probability of the
quantum steps is bounded by the preceding analysis. Hence the resulting
algorithm is a bounded-error quantum algorithm computing
\[
    n!\pmod p.
\]
\end{proof}

\section{Prime-Square Moduli}\label{sec:prime-square}

The prime-modulus algorithm reconstructs \(J_a(\chi)\) in compact algebraic
form before reducing it modulo \(\mathfrak p\). The same compact
representation can instead be reduced modulo \(\mathfrak p^2\). Using
notation from \Cref{sec:preliminaries}, define
\[
    J_a^{(2)}
    :=
    \rho_{\mathfrak p^2}\!\left(J_a(\chi)\right)
    \in\Z/p^2\Z.
\]
For \(m\ge0\), let
\[
    H_m
    :=
    \sum_{j=1}^{m}\frac1j
    \pmod p,
    \qquad
    H_0:=0.
\]
All inverses in \(H_m\) are taken in \(\F_p\). Since \(q\mid p-1\), the
integer \(q\) is invertible modulo \(p^2\).

\subsection{The Central Congruence Modulo \texorpdfstring{\(p^2\)}{p squared}}

\begin{proposition}\label{prop:lifted-central}
For every \(1\le a<q\),
\[
    (aK)!
    \equiv
    (-1)^{a-1}J_a^{(2)}(K!)^a
    \left(
        1+\frac{ap}{q}(H_{aK}-H_K)
    \right)
    \pmod{p^2}.
\]
\end{proposition}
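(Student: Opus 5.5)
The plan is to pass to the \(p\)-adic completion \(L_{\mathfrak p}\simeq\Q_p\) determined by \(\zeta_q\mapsto\widetilde\eta\) (lifted to the Teichm\"uller root of unity). There I would express \(J_a(\chi)\) through Morita's \(p\)-adic Gamma function via the Gross--Koblitz formula; this is exactly the content of Young's congruence \cite[Theorem~2.2]{Young1995}, which I would invoke in specialized form. Under this embedding, \(\chi\) becomes a power \(\omega^{-K}\) (or \(\omega^{K}\), depending on the normalisation of \(\eta=g^{-K}\)) of the Teichm\"uller character \(\omega\). Likewise \(\chi^a\) becomes \(\omega^{\mp aK}\). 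Gross--Koblitz then gives \(G(\omega^{-k})=-\pi^{k}\Gamma_p\!\left(\tfrac{k}{p-1}\right)\) for \(0<k<p-1\). Since \(aK<p-1\), the powers of the Dwork uniformizer \(\pi\) cancel in \(G(\chi)^a/G(\chi^a)\), and \Cref{prop:gauss-jacobi} yields an exact identity
\[
    J_a(\chi)=\pm\,\frac{\Gamma_p(1/q)^a}{\Gamma_p(a/q)}
\]
in \(\Z_p\). Reducing this identity modulo \(p^2\) gives \(J_a^{(2)}\). I would fix the sign by comparing with the already-proved congruence \eqref{eq:central} modulo \(p\).

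Next I would perform the elementary block expansion. The key observation is that \(1/q=K/(p-1)=-K(1+p+p^2+\cdots)\) as a \(p\)-adic number. Hence, modulo \(p^2\), the argument \(a/q\) equals the integer \(-aK\) shifted by \(-aKp\). I would use the functional equation \(\Gamma_p(x+1)=-x\Gamma_p(x)\) for \(p\nmid x\), together with the reflection formula \(\Gamma_p(x)\Gamma_p(1-x)=\pm1\). These convert \(\Gamma_p(-m)\) for \(0\le m<p\) into \(\pm 1/m!\), or equivalently relate \(\Gamma_p(1+m)=(-1)^{m+1}m!\). For the first-order shift I would use the Taylor-type congruence
\[
    \Gamma_p(y+pz)\equiv\Gamma_p(y)\bigl(1+pz\,\lambda(y)\bigr)\pmod{p^2},
    \qquad
    \lambda(m+1)\equiv \lambda(1)+H_m\pmod p .
\]
This congruence follows directly from the product definition of \(\Gamma_p\) on integers and the local analyticity of \(\Gamma_p\). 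The constant \(\lambda(1)=\Gamma_p'(0)\) will cancel, because it appears \(a\) times in the numerator and \(a\) times, with total shift \(a\cdot\)(common factor), in the denominator.

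Collecting terms, the numerator contributes \((K!)^{-a}\) times \(\bigl(1+(\text{shift})\cdot aH_K\bigr)\). The denominator contributes \((aK)!^{-1}\) times \(\bigl(1+(\text{shift})\cdot H_{aK}\bigr)\), up to sign. The shift is \(-Kp\) per unit of \(1/q\), and since \(K\equiv -1/q\pmod p\), each shift is \(\equiv p/q\) per copy, i.e.\ \(ap/q\) in aggregate. This should produce
\[
    J_a^{(2)}\equiv(-1)^{a-1}\frac{(aK)!}{(K!)^a}\Bigl(1+\tfrac{ap}{q}(H_{aK}-H_K)\Bigr)^{-1}\pmod{p^2},
\]
which rearranges to the stated formula.

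The main obstacle is the bookkeeping in this last step: tracking the signs from \(\Gamma_p(m+1)=(-1)^{m+1}m!\) and from the reflection formula, and verifying that the derivative terms combine to exactly the coefficient \(a/q\) multiplying \(H_{aK}-H_K\) rather than, say, \(aH_K\) minus a multiple of \(H_{aK}\) with a different weight. I would check this first in the small case \(a=1\), where both sides are trivially consistent, and then at \(a=2\) against a brute-force numerical example before writing the general computation.
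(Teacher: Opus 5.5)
Your argument is correct, but it follows a different route from the paper's.

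\textbf{The paper's route.} The paper never works with \(\Gamma_p\) directly. It quotes the level-two case of Young's congruence. After matching Young's sign convention, identifying \(\chi\) with \(\omega^{-K}\), and checking \(\mathfrak p\nmid J_a(\chi)\), this gives \((-1)^{a-1}J_a^{(2)}\equiv\binom{a(p+1)K}{(p+1)K,\ldots,(p+1)K}/\binom{aK}{K,\ldots,K}\pmod{p^2}\). It then evaluates the \(p\)-free products \(\prod_{j\le(p+1)m,\,p\nmid j}j\) by an elementary block expansion. The harmonic sums come from \(\prod_{j\le m}(1+mp/j)\), and the coefficient \(ap/q\) comes from \(Kp\equiv-p/q\).

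\textbf{Your route.} You go one level lower, to Gross--Koblitz, and replace the block expansion by a first-order expansion of \(\Gamma_p\). The choice of additive character is harmless, since \(J_a(\chi)\) does not depend on it. Your bookkeeping closes:
\begin{itemize}
\item[(i)] \(1/q=-K+p/q\) holds exactly.
\item[(ii)] The functional equation gives \(\Gamma_p(-m)=1/m!\) and \(\lambda(-m)=\lambda(0)+H_m\) for \(0\le m\le p-1\).
\item[(iii)] Hence \(\Gamma_p(1/q)^a/\Gamma_p(a/q)\equiv\frac{(aK)!}{(K!)^a}\bigl(1-\tfrac{ap}{q}(H_{aK}-H_K)\bigr)\pmod{p^2}\), which is exactly the paper's intermediate identity.
\item[(iv)] With \(G(\omega^{-k})=-\pi^k\Gamma_p(k/(p-1))\) for \(0<k<p-1\), the sign \((-1)^{a-1}\) falls out directly, so comparing with the mod-\(p\) central congruence is only a check.
\end{itemize}
The two computations handle the same numbers in disguise. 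Since \((p+1)K+1\equiv 1-1/q\pmod{p^2}\), the paper's block products are integer-point evaluations of \(\pm\Gamma_p(1-1/q)=\pm\Gamma_p(1/q)^{-1}\).

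\textbf{The step you should not call direct.} This is the congruence \(\Gamma_p(y+pz)\equiv\Gamma_p(y)\bigl(1+pz\,\lambda(y)\bigr)\pmod{p^2}\). Local analyticity alone does not control the higher Taylor terms. You need \(p^k\Gamma_p^{(k)}(y)/k!\in p^2\Z_p\) for all \(k\ge2\), which is a known result for \(p\ge5\) (Long--Ramakrishna). That covers every nontrivial case, because \(a\ge2\) forces \(q\ge3\) and hence \(p\ge5\). Alternatively, you can evaluate \(\Gamma_p\) at integer representatives modulo \(p^2\), which puts you back at the paper's block lemma.

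\textbf{Trade-off.} Your route rests on a more basic input than Young's theorem and yields an exact \(p\)-adic identity rather than a congruence. It would also extend more transparently toward the \(p^e\) conjecture through higher Taylor coefficients of \(\Gamma_p\). The paper's route is fully elementary once Young's congruence is granted.
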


The proof is given in \Cref{app:lifted-central}. It follows from a
specialization of Young's congruence between Jacobi sums and multinomial
coefficients \cite[Theorem~2.2]{Young1995}, followed by an elementary block
expansion modulo \(p^2\).

\subsection{Evaluating the Harmonic Term}

It remains to compute
\[
    H_{aK}-H_K
\]
without summing an interval whose length may be comparable to \(p\).

For \(1\le t<q\), define
\[
    Q_t
    :=
    \frac{(1-\widetilde\eta^t)^{p-1}-1}{p}
    \pmod p.
\]
The power is computed modulo \(p^2\). Since
\[
    1-\eta^t\neq0
    \qquad\text{in }\F_p,
\]
Fermat's theorem implies that the numerator is divisible by \(p\).

\begin{lemma}\label{lem:fourier-harmonic}
For every \(1\le a<q\),
\[
    H_{aK}-H_K
    \equiv
    \sum_{t=1}^{q-1}
    (\eta^{at}-\eta^t)Q_t
    \pmod p.
\]
\end{lemma}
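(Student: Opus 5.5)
The plan is to prove the congruence in two stages. First, compute each Fermat quotient \(Q_t\) explicitly as a twisted harmonic sum. Then apply a roots-of-unity filter over \(t\), which turns the weighted sum into harmonic sums over residue classes modulo \(q\). These in turn rearrange into the block sums \(H_{sK}-H_{(s-1)K}\), which telescope.

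\emph{Stage 1: a formula for \(Q_t\).} Let \(x\) be any integer lift of \(\widetilde\eta^t\). Using \(\binom pj/p\equiv(-1)^{j-1}/j\pmod p\) for \(1\le j\le p-1\), expand \((1-x)^p\) modulo \(p^2\):
\[
  (1-x)^p\equiv 1-x^p-p\sum_{j=1}^{p-1}\frac{x^j}{j}\pmod{p^2}.
\]
Since \(\widetilde\eta\) is a root of \(\Phi_q\) modulo \(p^2\), it has order dividing \(q\), and \(p\equiv1\pmod q\). Hence \(\widetilde\eta^{tp}=\widetilde\eta^{t}\), and the right side becomes \((1-\widetilde\eta^t)-p\sum_j\eta^{tj}/j\). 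Dividing by the unit \(1-\widetilde\eta^t\) gives
\[
  Q_t\equiv-\frac{1}{1-\eta^t}\sum_{j=1}^{p-1}\frac{\eta^{tj}}{j}\pmod p .
\]

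\emph{Stage 2: the filter.} Substitute this formula into the right-hand side of the lemma. Use the geometric identity
\[
  \frac{\eta^{at}-\eta^t}{1-\eta^t}=-\sum_{s=1}^{a-1}\eta^{st},
\]
which is valid for \(1\le t<q\), and interchange the order of summation. This gives
\[
  \sum_{t=1}^{q-1}(\eta^{at}-\eta^t)Q_t\equiv\sum_{s=1}^{a-1}\sum_{j=1}^{p-1}\frac1j\sum_{t=1}^{q-1}\eta^{t(s+j)}.
\]
The inner sum equals \(q\cdot[\,q\mid s+j\,]-1\), because \(\eta\) has exact order \(q\) in \(\F_p\). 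The constant \(-1\) contributes a multiple of \(H_{p-1}\equiv0\pmod p\). Hence the right-hand side of the lemma equals
\[
  q\sum_{s=1}^{a-1}\;\sum_{\substack{1\le j\le p-1\\ j\equiv -s\ (\mathrm{mod}\ q)}}\frac1j\pmod p.
\]

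\emph{Stage 3: residue classes to intervals (main obstacle).} The remaining task is to show that \(q\) times the class sum for a fixed residue equals a harmonic block \(H_{s'K}-H_{(s'-1)K}\) for a suitable index \(s'=s'(s)\). As \(s\) runs over \(1,\dots,a-1\), these blocks should be exactly those indexed by \(s'=2,\dots,a\), so that the total telescopes to \(H_{aK}-H_K\).

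The tool I would use is the affine reindexing \(k\mapsto j=qk-mp\), where \(k\) runs over the block \(((m)K,(m+1)K]\). Because \(qK=p-1\), these values of \(j\) fill an interval of length \(p-1\) in steps of \(q\). Because \(p\equiv1\pmod q\), all of them lie in the single residue class \(-m\pmod q\). After reducing into \([1,p-1]\), they exhaust that class, and each term satisfies \(1/j\equiv 1/(qk)\pmod p\). Summing over the block gives \(q\sum_{j\equiv -m}1/j\equiv H_{(m+1)K}-H_{mK}\).

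I expect the main difficulty to be the bookkeeping at the block endpoints. One must check precisely which \(k\) in each block yield \(j\in[1,p-1]\), and that the endpoint terms (for instance \(j\equiv0\) or \(j=p\)) are either excluded or contribute multiples of \(p\). One must also match the class \(-s\) to the correct block index, possibly replacing \(s\) by \(q-s\) or applying the symmetry \(j\mapsto p-j\), which sends \(1/j\) to \(-1/j\) and class \(c\) to class \(1-c\). I would settle the correct correspondence by testing a small case, say \(q=3\), \(p=7\), before writing the general argument.
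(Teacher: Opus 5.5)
Your proposal is correct and is essentially the paper's proof run in the opposite direction. The paper starts from \(H_{aK}-H_K=\sum_{k=1}^{a-1}\sum_{u=1}^{K}1/(kK+u)\) and uses \(q(kK+u)\equiv qu-k\pmod p\) to turn each block into \(q\) times the sum of \(1/m\) over \(m\equiv-k\pmod q\); it then applies the same roots-of-unity filter and the same Fermat-quotient identity \(\sum_{m}\eta^{tm}/m\equiv-(1-\eta^t)Q_t\) that you derive. The endpoint bookkeeping you flag in Stage~3 is a non-issue: for \(k=sK+u\) with \(1\le u\le K\) and \(1\le s\le a-1\), your map gives \(j=qu-s\in[q-s,\,p-1-s]\subseteq[1,p-1]\) with no reduction needed, so the class \(-s\) matches the block \(H_{(s+1)K}-H_{sK}\) directly, and neither \(s\mapsto q-s\) nor \(j\mapsto p-j\) is required.
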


\begin{proof}
The congruence
\[
    \binom pm
    \equiv
    p\frac{(-1)^{m-1}}{m}
    \pmod{p^2},
    \qquad
    1\le m<p,
\]
and the identity
\[
    \widetilde\eta^p=\widetilde\eta
\]
give, for \(1\le t<q\),
\[
    (1-\widetilde\eta^t)^p
    \equiv
    1-\widetilde\eta^t
    -
    p\sum_{m=1}^{p-1}\frac{\eta^{tm}}{m}
    \pmod{p^2}.
\]
Dividing by \(1-\widetilde\eta^t\) and using the definition of \(Q_t\)
gives
\begin{equation}\label{eq:fermat-fourier}
    \sum_{m=1}^{p-1}\frac{\eta^{tm}}{m}
    \equiv
    -(1-\eta^t)Q_t
    \pmod p.
\end{equation}

Now write
\[
    H_{aK}-H_K
    =
    \sum_{k=1}^{a-1}
    \sum_{u=1}^{K}
    \frac1{kK+u}.
\]
Since
\[
    q(kK+u)
    =
    k(p-1)+qu
    \equiv
    qu-k
    \pmod p,
\]
we have
\[
    \sum_{u=1}^{K}\frac1{kK+u}
    =
    q
    \sum_{\substack{1\le m<p\\m\equiv-k\;(\mathrm{mod}\;q)}}
    \frac1m.
\]
The roots-of-unity identity
\[
    \mathbf 1_{m\equiv-k\;(\mathrm{mod}\;q)}
    =
    \frac1q
    \sum_{t=0}^{q-1}\eta^{t(m+k)}
\]
therefore gives
\[
\begin{aligned}
    H_{aK}-H_K
    &=
    \sum_{k=1}^{a-1}
    \sum_{t=0}^{q-1}
    \eta^{kt}
    \sum_{m=1}^{p-1}\frac{\eta^{tm}}{m}.
\end{aligned}
\]
The term \(t=0\) vanishes because
\[
    H_{p-1}=0\pmod p.
\]
For \(1\le t<q\),
\[
    \sum_{k=1}^{a-1}\eta^{kt}
    =
    \frac{\eta^t-\eta^{at}}{1-\eta^t}.
\]
Using \eqref{eq:fermat-fourier}, we obtain
\[
\begin{aligned}
    H_{aK}-H_K
    &=
    \sum_{t=1}^{q-1}
    \frac{\eta^t-\eta^{at}}{1-\eta^t}
    \left(-(1-\eta^t)Q_t\right) \\
    &=
    \sum_{t=1}^{q-1}
    (\eta^{at}-\eta^t)Q_t
    \pmod p.
\end{aligned}
\]
\end{proof}

\subsection{Proof of the Prime-Square Theorem}

\begin{proof}[Proof of \Cref{thm:prime-square}]
First suppose that \(0\le n<p\).

If \(n<p-1\), write
\[
    n=aK+r,
    \qquad
    0\le a<q,
    \qquad
    0\le r<K.
\]
If \(n=p-1\), set
\[
    a=q-1,
    \qquad
    r=K,
\]
so that
\[
    p-1=aK+r.
\]
Thus the final interval always has length at most \(K\).

If \(a=0\), then \(n<K\), and
\Cref{lem:interval-products} computes \(n!\bmod p^2\) directly. Suppose
now that \(1\le a<q\). The same lemma computes
\[
    K!\pmod{p^2}
    \qquad\text{and}\qquad
    \prod_{j=1}^{r}(aK+j)\pmod{p^2}
\]
in
\[
    \softO\left(\sqrt{\frac pq}\right)
\]
time.

For \(a=1\), one has
\[
    J_1^{(2)}=1.
\]
For \(2\le a<q\), \Cref{thm:jacobi-computation} produces an exact
compact representation of \(J_a(\chi)\) and, taking \(h=2\), computes
\[
    J_a^{(2)}
    =
    \rho_{\mathfrak p^2}\!\left(J_a(\chi)\right)
    \in\Z/p^2\Z
\]
in expected time \(\softO(q^c)\).

By \Cref{lem:fourier-harmonic}, the harmonic term in
\Cref{prop:lifted-central} is obtained from the \(q-1\) values \(Q_t\).
Each value requires one modular exponentiation modulo \(p^2\), so the total
cost is
\[
    \softO(q).
\]
This is absorbed by the \(\softO(q^c)\) cost of reconstructing the Jacobi
sum. Applying \Cref{prop:lifted-central} and multiplying by
\[
    \prod_{j=1}^{r}(aK+j)
\]
gives \(n!\bmod p^2\).

It remains to consider \(n\ge p\). If
\[
    p\le n<2p,
\]
write
\[
    n=p+m,
    \qquad
    0\le m<p.
\]
Then
\[
\begin{aligned}
    (p+m)!
    &=
    p(p-1)!
    \prod_{j=1}^{m}(p+j) \\
    &\equiv
    -p\,m!
    \pmod{p^2}.
\end{aligned}
\]
If \(m=0\), then \(m!=1\). If \(m>0\), only \(m!\bmod p\) is required, and
\Cref{thm:main} computes it within the same asymptotic bound.

If \(n\ge2p\), then both \(p\) and \(2p\) divide \(n!\), so
\[
    n!\equiv0\pmod{p^2}.
\]

Combining all cases gives expected running time
\[
    \softO\left(
        q^c+\sqrt{\frac pq}
    \right),
\]
with bounded error.
\end{proof}

\begin{remark}[Wilson quotients]
Let \(F\) be the representative of \((p-1)!\bmod p^2\) in
\(\{0,\ldots,p^2-1\}\). Wilson's theorem gives
\[
    p\mid F+1,
\]
and hence
\[
    W_p
    \equiv
    \frac{F+1}{p}
    \pmod p.
\]
Thus \Cref{thm:prime-square} computes the Wilson quotient modulo \(p\).
In particular, \(p\) is a Wilson prime if and only if
\[
    F\equiv-1\pmod{p^2}.
\]
\end{remark}

\section{Conclusion and Open Problems}\label{sec:conclusion}

We have given bounded-error quantum algorithms for computing modular
factorials modulo \(p\) and \(p^2\) under a provided-divisor promise on
\(p-1\). The central technical result is the exact reconstruction, in compact
algebraic form, of Jacobi sums attached to characters whose order grows with
the input. This representation can be evaluated modulo both
\(\mathfrak p\) and \(\mathfrak p^2\) without expanding the underlying
algebraic integer.

The first natural extension is from prime-square to higher prime-power
moduli.

\begin{conjecture}
\Cref{thm:prime-square} extends to every modulus \(p^e\), with at most a
\(\poly(e)\) factor in the running time.
\end{conjecture}

The condition \(q\mid(p-1)\) is essential to the present method and is its greatest weakness, but we
suspect that the exact divisibility is not inherent to the problem. As in other
quantum algorithms where a shift overcomes a lack of exact alignment, a
suitably shifted version of our construction may work for an intermediate
\(q\) that does not divide \(p-1\). We therefore conjecture the following.

\begin{conjecture}
There exists an absolute constant \(\varepsilon>0\) and a bounded-error
quantum algorithm that, for every prime \(p\) and every \(0<n<p\), computes
\[
    n!\pmod p
\]
in time
\[
    p^{1/2-\varepsilon+o(1)}.
\]
\end{conjecture}

Even within the studied regime, it would be useful to make the
constant \(c\) explicit and to reduce the polynomial dependence on \(q\).

\section*{Acknowledgments}
\addcontentsline{toc}{section}{Acknowledgments}

The author thanks Professor Amnon Ta-Shma for his valuable advice on the
structure of this paper and for his generous availability and support. The
author also thanks Professor Amir Shpilka for providing a supportive working
environment in his laboratory and for indirectly bringing the problem of
modular factorials to the author's attention.

\clearpage

\printbibliography[heading=bibintoc,title={References}]

\clearpage

\appendix

\section{Proof of the Central Congruence}\label{app:central-congruence}

We prove \eqref{eq:central}. For \(x=g^m\in\F_p^\times\),
\[
    \rho_{\mathfrak p}(\chi(x))
    =
    \rho_{\mathfrak p}(\zeta_q^m)
    =
    \eta^m
    =
    (g^m)^{-K}
    =
    x^{-K}.
\]
Set
\[
    s=p-1-K.
\]
Then \(x^{-K}=x^s\) for \(x\neq0\), while both sides are interpreted as
zero at \(x=0\). It follows that
\[
    \rho_{\mathfrak p}\!\left(J_a(\chi)\right)
    =
    \sum_{\substack{x_1,\ldots,x_a\in\F_p\\
                    x_1+\cdots+x_a=1}}
    \prod_{i=1}^{a}x_i^s.
\]

For \(y\in\F_p\), the indicator of the condition \(y=0\) is
\[
    \mathbf 1_{y=0}=1-y^{p-1}.
\]
Therefore,
\[
    \rho_{\mathfrak p}\!\left(J_a(\chi)\right)
    =
    \sum_{x_1,\ldots,x_a\in\F_p}
    \prod_{i=1}^{a}x_i^s
    \left(
        1-(x_1+\cdots+x_a-1)^{p-1}
    \right).
\]
The contribution from the first term vanishes, since
\[
    \sum_{x\in\F_p}x^s=0
\]
for \(0<s<p-1\).

In the multinomial expansion of the remaining term, let \(k_0\) denote the
exponent of \(-1\), and let \(k_i\) denote the exponent of \(x_i\). A term
survives the summation over \(x_i\) only if
\[
    p-1\mid s+k_i.
\]
Since
\[
    0<s+k_i<2(p-1),
\]
this forces
\[
    s+k_i=p-1,
    \qquad
    k_i=K
\]
for every \(i\). Consequently,
\[
    k_0=p-1-aK,
\]
and every other exponent pattern vanishes. Using
\[
    \sum_{x\in\F_p}x^{p-1}=-1,
\]
we obtain
\[
    \rho_{\mathfrak p}\!\left(J_a(\chi)\right)
    =
    -(-1)^{p-1-aK}(-1)^a
    \frac{(p-1)!}{(p-1-aK)!(K!)^a}.
\]
Finally,
\[
    \frac{(p-1)!}{(p-1-aK)!}
    =
    \prod_{j=1}^{aK}(p-j)
    \equiv
    (-1)^{aK}(aK)!
    \pmod p.
\]
Combining the signs gives
\[
    \rho_{\mathfrak p}\!\left(J_a(\chi)\right)
    =
    (-1)^{a-1}\frac{(aK)!}{(K!)^a},
\]
which is equivalent to \eqref{eq:central}.

\section{Gauss--Jacobi Identities}\label{app:gauss-jacobi}

\begin{proof}[Proof of \Cref{prop:gauss-jacobi}]
Expanding \(G(\chi)^a\) and grouping the summands according to
\[
    s=x_1+\cdots+x_a
\]
gives
\[
    G(\chi)^a
    =
    \sum_{s\in\F_p}
    \exp\left(\frac{2\pi i s}{p}\right)
    \sum_{\substack{x_1,\ldots,x_a\in\F_p\\
                    x_1+\cdots+x_a=s}}
    \prod_{i=1}^a\chi(x_i).
\]
For \(s\neq0\), the change of variables \(x_i=sy_i\) shows that the inner
sum equals
\[
    \chi^a(s)J_a(\chi).
\]
The term \(s=0\) vanishes because \(\chi^a\) is nontrivial. Hence
\[
    G(\chi)^a
    =
    J_a(\chi)G(\chi^a),
\]
which proves the first identity. The second follows from the standard
Gauss-sum magnitude
\[
    |G(\psi)|=\sqrt p
\]
for every nontrivial multiplicative character \(\psi\)
\cite[Chapter~1]{BerndtEvansWilliams}.
\end{proof}

\section{The Stickelberger Factorization}\label{app:stickelberger}

For multiplicative characters \(\psi_1,\psi_2\) on \(\F_p^\times\), extended
by zero at the origin, write
\[
    J(\psi_1,\psi_2)
    =
    \sum_{x\in\F_p}\psi_1(x)\psi_2(1-x).
\]

\begin{lemma}\label{lem:multifold-jacobi-product}
For every \(1\le a<q\),
\[
    J_a(\chi)
    =
    \prod_{j=1}^{a-1}J(\chi^j,\chi),
\]
where the product is empty when \(a=1\).
\end{lemma}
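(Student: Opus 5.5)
We argue by induction on \(a\), the case \(a=1\) being the empty product \(J_1(\chi)=1\). Since \(q\nmid ja\) for \(1\le j<a<q\), every power \(\chi^j\) with \(1\le j\le a\) is a nontrivial character of \(\F_p^\times\). We therefore avoid direct manipulation of the multifold sums and pass through Gauss sums, where \Cref{prop:gauss-jacobi} already supplies the formula \(J_a(\chi)=G(\chi)^a/G(\chi^a)\).

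The key step is the classical two-character identity: for nontrivial characters \(\psi_1,\psi_2\) with \(\psi_1\psi_2\) nontrivial,
\[
    J(\psi_1,\psi_2)=\frac{G(\psi_1)G(\psi_2)}{G(\psi_1\psi_2)}.
\]
We prove it exactly as in \Cref{app:gauss-jacobi}. Expand \(G(\psi_1)G(\psi_2)\) and group the terms by \(s=x+y\). For \(s\neq0\), the substitution \(x=sx'\) turns the inner sum into \(\psi_1\psi_2(s)J(\psi_1,\psi_2)\). The \(s=0\) contribution is a multiple of \(\sum_{x}\psi_1\psi_2(x)\), which vanishes because \(\psi_1\psi_2\) is nontrivial.

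We apply this with \(\psi_1=\chi^j\), \(\psi_2=\chi\) and \(1\le j\le a-1\). All three characters \(\chi^j,\chi,\chi^{j+1}\) are nontrivial because \(j+1\le a<q\), so
\[
    J(\chi^j,\chi)=\frac{G(\chi^j)G(\chi)}{G(\chi^{j+1})}.
\]
Taking the product over \(j=1,\ldots,a-1\), the quotient telescopes to \(G(\chi)^{a-1}\cdot G(\chi)/G(\chi^a)=G(\chi)^a/G(\chi^a)\). By \Cref{prop:gauss-jacobi} this equals \(J_a(\chi)\). The division is legitimate because each \(|G(\chi^j)|=\sqrt p\neq0\).

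The only delicate point is confirming the nontriviality hypotheses at each stage, which reduces to \(a<q\). Once those hold, the argument is purely algebraic.
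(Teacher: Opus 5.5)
Your argument is correct and is essentially the paper's proof: telescope the two-character relation \(J(\chi^j,\chi)=G(\chi^j)G(\chi)/G(\chi^{j+1})\) over \(1\le j\le a-1\), then identify \(G(\chi)^a/G(\chi^a)\) with \(J_a(\chi)\) via \Cref{prop:gauss-jacobi}; your only addition is a short proof of the two-character relation, which the paper cites as standard. Two cosmetic slips: the opening justification should read \(q\nmid j\) for \(1\le j\le a<q\) (the claim \(q\nmid ja\) fails in general, e.g.\ \(q=6\), \(j=2\), \(a=3\)), and you announce an induction on \(a\) that you never actually use.
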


\begin{proof}
For \(a=1\), both sides equal \(1\). Suppose \(2\le a<q\). Since
\(\chi^j\) and \(\chi^{j+1}\) are nontrivial for
\(1\le j\le a-1\), the two-character Gauss--Jacobi relation gives
\[
    J(\chi^j,\chi)
    =
    \frac{G(\chi^j)G(\chi)}{G(\chi^{j+1})}.
\]
Multiplying for \(j=1,\ldots,a-1\), the intermediate Gauss sums cancel:
\[
    \prod_{j=1}^{a-1}J(\chi^j,\chi)
    =
    \frac{G(\chi)^a}{G(\chi^a)}
    =
    J_a(\chi).
\]
\end{proof}

\begin{proposition}\label{prop:multifold-stickelberger}
For every \(1\le a<q\),
\[
    \bigl(J_a(\chi)\bigr)
    =
    \prod_{t\in(\Z/q\Z)^\times}
    \mathfrak p_t^{\left\lfloor at/q\right\rfloor}.
\]
\end{proposition}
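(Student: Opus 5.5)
The plan is to reduce the $a$-fold statement to the two-character case through \Cref{lem:multifold-jacobi-product}, and then telescope the exponents. Ideals multiply, so the lemma gives
\[
    \bigl(J_a(\chi)\bigr)=\prod_{j=1}^{a-1}\bigl(J(\chi^j,\chi)\bigr).
\]
It therefore suffices to show, for each $1\le j\le a-1$ (so that $j+1\le a<q$), that
\[
    \bigl(J(\chi^j,\chi)\bigr)
    =
    \prod_{t\in(\Z/q\Z)^\times}
    \mathfrak p_t^{\lfloor (j+1)t/q\rfloor-\lfloor jt/q\rfloor-\lfloor t/q\rfloor}.
\]
Here $t$ is represented in $\{1,\ldots,q-1\}$, so the last floor vanishes. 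Summing over $j$, the exponents telescope to $\lfloor at/q\rfloor-\lfloor t/q\rfloor=\lfloor at/q\rfloor$, which is the claimed formula.

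For the two-character factorization, I would first locate the support. Since $J(\chi^j,\chi)\overline{J(\chi^j,\chi)}=p$ by the usual Gauss-sum magnitudes (as in \Cref{app:gauss-jacobi}), only primes above $p$ can occur. These are exactly the $\mathfrak p_t$, because $p$ splits completely in $L$. To compute $v_{\mathfrak p_t}$, I would use the Galois action: $\sigma_t$ maps $\mathfrak p_t$ to $\mathfrak p$ and sends $J(\chi^j,\chi)$ to $J(\chi^{jt},\chi^t)$. Hence
\[
    v_{\mathfrak p_t}\bigl(J(\chi^j,\chi)\bigr)=v_{\mathfrak p}\bigl(J(\chi^{jt},\chi^{t})\bigr),
\]
and it is enough to compute valuations at the single prime $\mathfrak p$ for arbitrary exponents.

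The key computation is Stickelberger's congruence for Gauss sums, stated in the normalization of this paper. As in \Cref{app:central-congruence}, $\rho_{\mathfrak p}(\chi^m(x))=x^{-mK}$. Let $\mathfrak P$ be a prime of $\Q(\zeta_q,\zeta_p)$ above $\mathfrak p$; it is totally ramified over $\mathfrak p$ with index $p-1$. Stickelberger's theorem then gives
\[
    v_{\mathfrak P}\bigl(G(\chi^m)\bigr)=\langle mK\rangle_{p-1}=(p-1)\{m/q\},
\]
where $\langle\cdot\rangle_{p-1}$ is the least nonnegative residue. Dividing by the ramification index and using $J(\psi_1,\psi_2)=G(\psi_1)G(\psi_2)/G(\psi_1\psi_2)$ gives
\[
    v_{\mathfrak p}\bigl(J(\chi^{jt},\chi^t)\bigr)=\{jt/q\}+\{t/q\}-\{(j+1)t/q\}.
\]
This equals $\lfloor (j+1)t/q\rfloor-\lfloor jt/q\rfloor-\lfloor t/q\rfloor$, as required. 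One could instead cite the standard two-character factorization directly, for example from \cite{Katre2000}.

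The main obstacle is the sign and orientation convention: deciding whether the exponent is $\lfloor at/q\rfloor$ or $\lfloor a(q-t)/q\rfloor$ depends on how $\chi$, $\eta=g^{-K}$ and $\mathfrak p_t=\sigma_t^{-1}(\mathfrak p)$ are matched. I would pin this down with an independent check at $t=1$. \Cref{app:central-congruence} shows that $\rho_{\mathfrak p}(J_a(\chi))=\pm(aK)!/(K!)^a$ is nonzero, so $v_{\mathfrak p}=0=\lfloor a/q\rfloor$. The opposite convention would give $\lfloor a(q-1)/q\rfloor=a-1>0$ for $a\ge2$, so this check fixes the orientation. A second consistency check is that the total exponent $\sum_t\lfloor at/q\rfloor=(a-1)\varphi(q)/2$ matches the norm $p^{(a-1)\varphi(q)/2}$ obtained from \Cref{prop:gauss-jacobi}.
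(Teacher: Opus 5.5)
Your proposal is correct and follows essentially the same route as the paper. You reduce to the two-character sums $J(\chi^j,\chi)$ via \Cref{lem:multifold-jacobi-product}, use the two-character exponents $\lfloor (j+1)t/q\rfloor-\lfloor jt/q\rfloor-\lfloor t/q\rfloor$, and telescope using $1\le t<q$. The one difference is that the paper cites Katre for the two-character factorization and simply asserts the translation of conventions. You instead derive it from Stickelberger's congruence $v_{\mathfrak P}(G(\chi^m))=\langle mK\rangle_{p-1}$, which holds because $\rho_{\mathfrak p}(\chi^m(x))=x^{-mK}$, together with $\sigma_t(\mathfrak p_t)=\mathfrak p$. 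This makes the orientation explicit, and your $t=1$ check against \Cref{app:central-congruence} confirms it independently.
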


\begin{proof}
Translated to the character and inverse-action conventions used here, the
standard prime-ideal factorization of a two-character Jacobi sum gives
\[
    \bigl(J(\chi^j,\chi^k)\bigr)
    =
    \prod_{t\in(\Z/q\Z)^\times}
    \mathfrak p_t^{c_t(j,k)},
\]
where
\[
    c_t(j,k)
    =
    \left\lfloor\frac{(j+k)t}{q}\right\rfloor
    -
    \left\lfloor\frac{jt}{q}\right\rfloor
    -
    \left\lfloor\frac{kt}{q}\right\rfloor
\]
\cite[Proposition~4(ii), pp.~83--84]{Katre2000}. Katre uses the negative of
our two-character Jacobi sum, which does not change the generated principal
ideal.

By \Cref{lem:multifold-jacobi-product},
\[
    \bigl(J_a(\chi)\bigr)
    =
    \prod_{j=1}^{a-1}\bigl(J(\chi^j,\chi)\bigr).
\]
Hence the exponent of \(\mathfrak p_t\) is
\begin{align*}
    \sum_{j=1}^{a-1}c_t(j,1)
    &=
    \sum_{j=1}^{a-1}
    \left(
        \left\lfloor\frac{(j+1)t}{q}\right\rfloor
        -
        \left\lfloor\frac{jt}{q}\right\rfloor
        -
        \left\lfloor\frac{t}{q}\right\rfloor
    \right) \\
    &=
    \left\lfloor\frac{at}{q}\right\rfloor,
\end{align*}
where the last equality uses \(1\le t<q\) and telescoping. This proves the
factorization.
\end{proof}

\section{Proof of the Central Congruence Modulo \texorpdfstring{\(p^2\)}{p squared}}
\label{app:lifted-central}

This appendix proves \Cref{prop:lifted-central}. The proof first
specializes Young's congruence between Jacobi sums and multinomial
coefficients and then expands the resulting quotient modulo \(p^2\).

\subsection{The Specialization of Young's Congruence}

Young's theorem relates generalized Jacobi sums to quotients of multinomial
coefficients. We first state the part of the theorem that is needed here.

Let \(\omega\) denote the Teichm\"uller character of
\(\F_p^\times\). Young uses the convention
\[
    J_{\mathrm Y}(\psi_1,\ldots,\psi_s)
    :=
    -
    \sum_{\substack{x_1,\ldots,x_s\in\F_p\\
                    x_1+\cdots+x_s=1}}
    \prod_{j=1}^{s}\psi_j(x_j).
\]
Thus \(J_{\mathrm Y}\) is the negative of the Jacobi sum convention used in
this paper.

Let
\[
    \alpha_1,\ldots,\alpha_s\in[0,1)\cap\Q
\]
have denominators not divisible by \(p\), and suppose that
\[
    A_j=(p-1)\alpha_j
\]
is an integer for every \(j\). Let
\[
    \alpha=\alpha_1+\cdots+\alpha_s,
\]
and assume that \(0<\alpha<1\). The second level of
\cite[Theorem~2.2]{Young1995} gives
\begin{equation}\label{eq:young-level-two}
    \frac{
        \displaystyle
        \binom{(p^2-1)\alpha}
              {(p^2-1)\alpha_1,\ldots,(p^2-1)\alpha_s}
    }{
        \displaystyle
        \binom{(p-1)\alpha}
              {(p-1)\alpha_1,\ldots,(p-1)\alpha_s}
    }
    \equiv
    (-1)^s
    J_{\mathrm Y}
    \left(
        \omega^{-A_1},\ldots,\omega^{-A_s}
    \right)
    \pmod{p^2},
\end{equation}
provided that the Jacobi sum on the right is not divisible by the
distinguished prime above \(p\). More generally, Young's theorem gives a
stronger modulus that also depends on its valuation at that prime.

We now specialize \eqref{eq:young-level-two} to the Jacobi sum occurring in
this paper.

\begin{lemma}\label{lem:young-specialization}
For every \(2\le a<q\),
\[
    \frac{
        \displaystyle
        \binom{a(p+1)K}
              {\underbrace{(p+1)K,\ldots,(p+1)K}_{a\text{ times}}}
    }{
        \displaystyle
        \binom{aK}
              {\underbrace{K,\ldots,K}_{a\text{ times}}}
    }
    \equiv
    (-1)^{a-1}J_a^{(2)}
    \pmod{p^2}.
\]
The multinomial coefficient in the denominator is invertible modulo \(p^2\).
\end{lemma}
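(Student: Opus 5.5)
The plan is to obtain the lemma as the direct specialization of \eqref{eq:young-level-two} with
\[
    s=a,\qquad \alpha_1=\cdots=\alpha_a=\frac1q .
\]
Then \(A_j=(p-1)/q=K\) is an integer, the denominators \(q\) are prime to \(p\) (since \(q\mid p-1\)), and \(\alpha=a/q\in(0,1)\) because \(1\le a<q\). Moreover \((p^2-1)\alpha_j=(p+1)K\) and \((p^2-1)\alpha=a(p+1)K\), while \((p-1)\alpha_j=K\) and \((p-1)\alpha=aK\). So the left-hand side of \eqref{eq:young-level-two} is exactly the quotient in the lemma. The right-hand side becomes
\[
    (-1)^a J_{\mathrm Y}(\omega^{-K},\ldots,\omega^{-K})=(-1)^{a-1}\sum_{x_1+\cdots+x_a=1}\prod_i\omega^{-K}(x_i),
\]
using \(J_{\mathrm Y}=-J\). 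This already gives the sign \((-1)^{a-1}\).

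The substantive step is to identify \(\sum\prod\omega^{-K}(x_i)\), read modulo \(p^2\) in \(\Z_p\), with \(J_a^{(2)}=\rho_{\mathfrak p^2}(J_a(\chi))\). I would fix the embedding \(\iota\colon\Z[\zeta_q]\to\Z_p\) sending \(\zeta_q\) to \(\omega(g)^{-K}\). This is a root of \(\Phi_q\) in \(\Z_p\) congruent to \(g^{-K}=\eta\) modulo \(p\). By uniqueness of the Hensel lift (as in \Cref{sec:preliminaries}), its reduction modulo \(p^2\) is \(\widetilde\eta=\eta_2\), so \(\iota\bmod p^2=\rho_{\mathfrak p^2}\). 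Since \(\omega\) is multiplicative, \(\iota(\chi(g^m))=\omega(g)^{-Km}=\omega^{-K}(g^m)\), hence \(\iota\circ\chi=\omega^{-K}\) on \(\F_p^\times\), and both vanish at \(0\). Applying \(\iota\) termwise to the defining sum of \(J_a(\chi)\) then gives \(\iota(J_a(\chi))=\sum\prod\omega^{-K}(x_i)\), whose reduction modulo \(p^2\) is \(J_a^{(2)}\). This is the same computation as \Cref{app:central-congruence}, lifted from \(\F_p\) to \(\Z_p\).

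Next I would verify the non-divisibility hypothesis of \eqref{eq:young-level-two}. The distinguished prime in Young's setup is the kernel of \(\iota\) modulo \(p\), which by the previous paragraph is \(\mathfrak p=\mathfrak p_1\). By \Cref{prop:stickelberger-factorization}, the exponent of \(\mathfrak p_1\) in \((J_a(\chi))\) is \(\lfloor a/q\rfloor=0\), so the hypothesis holds. Finally, the denominator \(\binom{aK}{K,\ldots,K}=(aK)!/(K!)^a\) is a \(p\)-adic unit because \(aK<p\), so it is invertible modulo \(p^2\).

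I expect the main obstacle to be the convention bookkeeping rather than any computation. One must check that Young's ``distinguished prime'' and his normalization of \(\omega\) (inverse or not, and which embedding of \(\mu_q\)) match the choice \(\eta=g^{-K}\). If Young's convention instead uses \(\omega^{A_j}\) or the conjugate embedding, I would compensate by replacing \(\alpha_j\) with \(1-\alpha_j\) or by applying \(\sigma_{-1}\). The a priori check is the mod-\(p\) reduction: it must reproduce \eqref{eq:central}, which has already been proved in \Cref{app:central-congruence}. This pins down both the sign and the prime.
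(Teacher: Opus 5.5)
Your proposal is correct and follows the paper's proof essentially step for step. Both use the specialization \(s=a\), \(\alpha_j=1/q\), identify \(\chi\) with \(\omega^{-K}\) through the unique lift \(\widetilde\eta\) of \(\eta=g^{-K}\), check \(v_{\mathfrak p}(J_a(\chi))=0\) via Stickelberger, apply the sign change \(J_{\mathrm Y}=-J_a(\chi)\), and deduce invertibility of the denominator from \(aK<p\). Your explicit embedding \(\iota\colon\Z[\zeta_q]\to\Z_p\) spells out the paper's one-line character identification in more detail, and your consistency check against the central congruence modulo \(p\) is a harmless extra.
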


\begin{proof}
In \eqref{eq:young-level-two}, take
\[
    s=a,
    \qquad
    \alpha_1=\cdots=\alpha_a=\frac1q.
\]
Since \(q\mid p-1\),
\[
    A_1=\cdots=A_a
    =
    \frac{p-1}{q}
    =
    K.
\]
Moreover,
\[
    \alpha=\frac aq<1.
\]

At the first level,
\[
    (p-1)\alpha_j=K,
    \qquad
    (p-1)\alpha=aK.
\]
At the second level,
\[
    (p^2-1)\alpha_j
    =
    \frac{p^2-1}{q}
    =
    (p+1)K,
\]
and
\[
    (p^2-1)\alpha=a(p+1)K.
\]
Thus the quotient in \eqref{eq:young-level-two} is precisely the quotient
displayed in the lemma.

It remains to identify the character and verify the divisibility condition.
Under the reduction
\[
    \zeta_q\longmapsto\widetilde\eta,
\]
the character \(\chi\) agrees with \(\omega^{-K}\). Indeed, on the primitive
root \(g\), both characters take the value \(\widetilde\eta\), the unique
\(q\)-th root of unity modulo \(p^2\) reducing to
\[
    \eta=g^{-K}\pmod p.
\]

By the Stickelberger factorization,
\[
    v_{\mathfrak p}\!\left(J_a(\chi)\right)
    =
    \left\lfloor\frac aq\right\rfloor
    =
    0.
\]
Hence the Jacobi sum is not divisible by the distinguished prime, so
\eqref{eq:young-level-two} applies modulo \(p^2\).

Finally, Young's Jacobi sum has the opposite sign from ours:
\[
    J_{\mathrm Y}
    \left(
        \omega^{-K},\ldots,\omega^{-K}
    \right)
    =
    -J_a(\chi).
\]
Since \(s=a\), the right-hand side of
\eqref{eq:young-level-two} becomes
\[
    (-1)^a\bigl(-J_a(\chi)\bigr)
    =
    (-1)^{a-1}J_a(\chi).
\]
Reducing through \(\rho_{\mathfrak p^2}\) gives the claimed congruence.

Since \(aK<p\), neither \((aK)!\) nor \(K!\) is divisible by \(p\).
Therefore,
\[
    \binom{aK}{K,\ldots,K}
    =
    \frac{(aK)!}{(K!)^a}
\]
is invertible modulo \(p^2\).
\end{proof}

\subsection{The Block Expansion}

\begin{lemma}\label{lem:p-free-block-product}
For every \(0\le m<p\),
\[
    \prod_{\substack{1\le j\le(p+1)m\\p\nmid j}}j
    \equiv
    ((p-1)!)^m\,m!
    \left(1+mpH_m\right)
    \pmod{p^2}.
\]
\end{lemma}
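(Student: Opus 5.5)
We prove the congruence by partitioning the range \(1\le j\le (p+1)m\) into blocks determined by the multiples of \(p\), and evaluating each block modulo \(p^2\). Since \((p+1)m=pm+m\) and \(m<p\), the integers \(j\) with \(p\nmid j\) in this range split into \(m\) full blocks
\[
    \{kp+u:1\le u\le p-1\},\qquad 0\le k\le m-1,
\]
together with the final partial block \(\{mp+u:1\le u\le m\}\). No multiple of \(p\) lies in the partial block, because \(m<p\).

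For a full block, expand
\[
    \prod_{u=1}^{p-1}(kp+u)\equiv (p-1)!\Bigl(1+kp\sum_{u=1}^{p-1}\frac1u\Bigr)\pmod{p^2}.
\]
Here \(\sum_{u=1}^{p-1}u^{-1}=H_{p-1}\equiv0\pmod p\). This is the fact already used in the proof of \Cref{lem:fourier-harmonic}, and it follows from pairing \(u\) with \(p-u\), or from the fact that \(u\mapsto u^{-1}\) permutes \(\F_p^\times\) and \(\sum_{u=1}^{p-1} u=p(p-1)/2\equiv0\pmod p\) for odd \(p\). Thus each full block contributes exactly \((p-1)!\) modulo \(p^2\), and the \(m\) full blocks together contribute \(((p-1)!)^m\). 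The case \(p=2\) forces \(m\in\{0,1\}\), and we would check it directly. For \(m=1\), the left side is \(1\cdot 3=3\) and the right side is \(1\cdot1\cdot(1+2\cdot1)=3\), so the identity holds.

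For the partial block, expand similarly:
\[
    \prod_{u=1}^{m}(mp+u)\equiv m!\Bigl(1+mp\sum_{u=1}^{m}\frac1u\Bigr)=m!\,(1+mpH_m)\pmod{p^2}.
\]
This is legitimate because every \(u\le m<p\) is invertible modulo \(p\), and the \(p^2\) cross terms vanish. Since \(p\cdot x\bmod p^2\) depends only on \(x\bmod p\), the harmonic number \(H_m\) may be taken in \(\F_p\). Multiplying the full-block and partial-block contributions gives the lemma.

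The only slightly delicate point is the vanishing \(H_{p-1}\equiv0\pmod p\), which kills the first-order correction in every full block. That vanishing is what makes the result depend only on \(m\) and \(H_m\), rather than on a sum over \(k\). Everything else is the routine first-order expansion \(\prod(1+p x_u)\equiv1+p\sum x_u\pmod{p^2}\).
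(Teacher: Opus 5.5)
Your proof is correct and follows the paper's argument essentially verbatim: you split the range into the \(m\) complete blocks, each congruent to \((p-1)!\) modulo \(p^2\) because \(H_{p-1}\equiv0\pmod p\), and the final block \(\prod_{u=1}^{m}(mp+u)\equiv m!\,(1+mpH_m)\pmod{p^2}\), both via the first-order expansion. Your separate check of \(p=2\) covers a case the paper's pairing argument silently assumes away, since that argument needs \(p\) odd; nothing is actually lost there, because for \(m\le1\) only the \(k=0\) block occurs and its correction term vanishes regardless.
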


\begin{proof}
The product consists of \(m\) complete blocks followed by one final block:
\[
    \prod_{\substack{1\le j\le(p+1)m\\p\nmid j}}j
    =
    \prod_{h=0}^{m-1}
    \prod_{j=1}^{p-1}(hp+j)
    \prod_{j=1}^{m}(mp+j).
\]

For a complete block,
\[
\begin{aligned}
    \prod_{j=1}^{p-1}(hp+j)
    &=
    (p-1)!
    \prod_{j=1}^{p-1}
    \left(1+\frac{hp}{j}\right) \\
    &\equiv
    (p-1)!
    \left(1+hpH_{p-1}\right)
    \pmod{p^2}.
\end{aligned}
\]
Pairing \(j\) with \(p-j\) gives
\[
    H_{p-1}=0\pmod p,
\]
so every complete block is congruent to \((p-1)!\) modulo \(p^2\).

For the final block,
\[
\begin{aligned}
    \prod_{j=1}^{m}(mp+j)
    &=
    m!
    \prod_{j=1}^{m}
    \left(1+\frac{mp}{j}\right) \\
    &\equiv
    m!\left(1+mpH_m\right)
    \pmod{p^2}.
\end{aligned}
\]
Multiplying the blocks proves the lemma.
\end{proof}

\subsection{Derivation of the Congruence}

\begin{proof}[Proof of Proposition~\ref{prop:lifted-central}]
The case \(a=1\) follows from \(J_1(\chi)=1\), so assume
\[
    2\le a<q.
\]

Separating the multiples of \(p\) in the two multinomial coefficients gives
\[
    \frac{
        \displaystyle
        \binom{a(p+1)K}
              {\underbrace{(p+1)K,\ldots,(p+1)K}_{a\text{ times}}}
    }{
        \displaystyle
        \binom{aK}
              {\underbrace{K,\ldots,K}_{a\text{ times}}}
    }
    =
    \frac{
        \displaystyle
        \prod_{\substack{1\le j\le a(p+1)K\\p\nmid j}}j
    }{
        \displaystyle
        \left(
            \prod_{\substack{1\le j\le(p+1)K\\p\nmid j}}j
        \right)^a
    }.
\]
Applying Lemma~\ref{lem:p-free-block-product} with \(m=aK\) and \(m=K\)
gives
\[
\begin{aligned}
    \frac{
        \displaystyle
        \binom{a(p+1)K}
              {(p+1)K,\ldots,(p+1)K}
    }{
        \displaystyle
        \binom{aK}{K,\ldots,K}
    }
    &\equiv
    \frac{(aK)!}{(K!)^a}
    \frac{1+aKpH_{aK}}
         {(1+KpH_K)^a} \\
    &\equiv
    \frac{(aK)!}{(K!)^a}
    \left(
        1+aKp(H_{aK}-H_K)
    \right)
    \pmod{p^2}.
\end{aligned}
\]

Since
\[
    K=\frac{p-1}{q},
\]
we have
\[
    Kp
    =
    \frac{p(p-1)}q
    \equiv
    -\frac pq
    \pmod{p^2}.
\]
Therefore,
\[
    \frac{
        \displaystyle
        \binom{a(p+1)K}
              {(p+1)K,\ldots,(p+1)K}
    }{
        \displaystyle
        \binom{aK}{K,\ldots,K}
    }
    \equiv
    \frac{(aK)!}{(K!)^a}
    \left(
        1-\frac{ap}{q}(H_{aK}-H_K)
    \right)
    \pmod{p^2}.
\]

Combining this with Lemma~\ref{lem:young-specialization} gives
\[
    (-1)^{a-1}J_a^{(2)}
    \equiv
    \frac{(aK)!}{(K!)^a}
    \left(
        1-\frac{ap}{q}(H_{aK}-H_K)
    \right)
    \pmod{p^2}.
\]
Since
\[
    \frac{ap}{q}(H_{aK}-H_K)
\]
is divisible by \(p\), its square is zero modulo \(p^2\), and hence
\[
    \left(
        1-\frac{ap}{q}(H_{aK}-H_K)
    \right)^{-1}
    \equiv
    1+\frac{ap}{q}(H_{aK}-H_K)
    \pmod{p^2}.
\]
Multiplying by this inverse and by \((K!)^a\) proves
\[
    (aK)!
    \equiv
    (-1)^{a-1}J_a^{(2)}(K!)^a
    \left(
        1+\frac{ap}{q}(H_{aK}-H_K)
    \right)
    \pmod{p^2}.
\]
\end{proof}

\clearpage

\end{document}